\def\NOTES{0}
\newcommand{\ninj}{\Ninja[1.5]}
\newcommand{\MessageType}[1]{\textsc{#1}}
\newcommand{\MPropose}{\MessageType{propose}}
\newcommand{\MAck}{\MessageType{ack}}
\newcommand{\MAckSig}{\MessageType{sig}}
\newcommand{\MVote}{\MessageType{vote}}
\newcommand{\MCertAck}{\MessageType{CertAck}}
\newcommand{\MCommit}{\MessageType{Commit}}
\newcommand{\xHat}{\widehat{x}}
\newcommand{\tauHat}{\widehat{\tau}}
\newcommand{\sigVote}{\phi_{\mathit{vote}}}
\newcommand{\sigCertAck}{\phi_{\mathit{ca}}}
\newcommand{\vote}{\mathit{vote}}
\newcommand{\nilVote}{\mathit{nil}}
\newcommand{\sigAck}{\phi_{\mathit{ack}}}
\newcommand{\comCert}{cc}
\newcommand{\sigmaHat}{\widehat{\sigma}}
\newcommand{\influential}{influential}
\newcommand{\decisionView}{decision view}
\newcommand{\True}{\mathit{true}}
\newcommand{\False}{\mathit{false}}
\newcommand{\Domain}{\mathcal{V}}
\newcommand{\AllProc}{\Pi}
\newcommand{\ByzProc}{\mathcal{B}}
\newcommand{\Validity}{Validity}
\newcommand{\ExtendedValidity}{Extended validity}
\newcommand{\extendedValidity}{extended validity}
\newcommand{\WeakValidity}{Weak validity}
\newcommand{\weakValidity}{weak validity}
\newcommand{\Liveness}{Liveness}
\newcommand{\liveness}{liveness}
\newcommand{\Consistency}{Consistency}
\newcommand{\consistency}{consistency}
\newcommand{\fDecide}{\textsc{Decide}}
\newcommand{\fleader}{\mathit{leader}}
\newcommand{\fsign}{\mathit{sign}}
\newcommand{\xinput}{x^{\mathit{in}}}
\newcommand{\votes}{\mathit{votes}}
\newcommand{\exec}{\rho}
\newcommand{\IConf}{I}
\newcommand{\simFor}[1]{\overset{#1}{\sim}}
\newcommand{\Suspects}{\mathcal{M}}
\newcommand{\pred}{\mathit{pred}}
\newcommand{\T}{\mathcal{T}}
\newcommand{\Protocol}{\mathcal{P}}
\newcommand{\Schedule}{\mathcal{S}}
\algnewcommand{\LeftComment}[1]{\(\triangleright\) #1}
\algnewcommand{\LineComment}[1]{\State \LeftComment{#1}}
\algnewcommand{\LineCommentx}[1]{\Statex \LeftComment{#1}}
    \newcommand{\isInGrayOut}{0}
    \newcommand{\colored}[2]{\ifthenelse{\isInGrayOut=1}{#2}{{\color{#1}{#2}}}}
    \newcommand{\coloredst}[1]{\bgroup\markoverwith{{\color{#1}{\rule[0.5ex]{2pt}{0.4pt}}}}\ULon}
    \newcommand{\atcolor}{blue}
    \newcommand{\pkcolor}{orange}
    \newcommand{\yzcolor}{purple}
    \newcommand{\atrev}[1]{\colored{\atcolor}{#1}}
    \newcommand{\atadd}[1]{\atrev{#1}}
    \newcommand{\atremove}[1]{\coloredst{\atcolor}{#1}}
    \newcommand{\atreplace}[2]{\atremove{#1} \atadd{#2}}
    \newcommand{\pkremove}[1]{\coloredst{\pkcolor}{#1}}
    \newcommand{\yzremove}[1]{\coloredst{\yzcolor}{#1}}
    \newcommand{\atrev}[1]{#1}
    \newcommand{\atadd}[1]{#1}
    \newcommand{\atreplace}[2]{#2}
    \newcommand{\atremove}[1]{}
    \newcommand{\pkremove}[1]{}
    \newcommand{\yzremove}[1]{}
\begin{document}

\title{Revisiting Optimal Resilience of Fast Byzantine Consensus}
\subtitle{(Extended Version)}

\author{Petr Kuznetsov}
\authornote{The author was supported by TrustShare Innovation Chair.}
\email{petr.kuznetsov@telecom-paris.fr}
\affiliation{%
  \institution{LTCI, T\'el\'ecom Paris, Institut Polytechnique de Paris}
  \country{France}
}

\author{Andrei Tonkikh}
\email{andrei.tonkikh@gmail.com}
\affiliation{%
  \institution{National Research University Higher School of Economics}
  \country{Russia}
}

\author{Yan X Zhang}
\email{yan.x.zhang@sjsu.edu}
\affiliation{%
  \institution{San Jos\'e State University}
  \country{United States}
}

\begin{abstract}
    It is a common belief that Byzantine fault-tolerant solutions for consensus are significantly slower than their crash fault-tolerant counterparts.
    Indeed, in PBFT, the most widely known Byzantine fault-tolerant consensus protocol, it takes three message delays to decide a value, in contrast with just two in Paxos.
    This motivates the search for \emph{fast} Byzantine consensus algorithms that can produce decisions after just two message delays \emph{in the common case}, e.g., 
    under the assumption that the current leader is correct and not suspected by correct processes.
    The (optimal) two-step latency comes with the cost of lower resilience: fast Byzantine consensus requires more processes to tolerate the same number of faults.
    In particular, $5f+1$ processes were claimed to be necessary to tolerate $f$ Byzantine failures.

    In this paper, we present a fast Byzantine consensus algorithm that relies on just $5f-1$ processes. Moreover, we show that $5f-1$ is the tight lower bound, correcting a mistake in the earlier work.
    While the difference of just $2$ processes may appear insignificant for large values of $f$,
    it can be crucial for systems of a smaller scale.
    In particular, for $f=1$, our algorithm requires only $4$ processes, which is optimal for any (not necessarily fast) partially synchronous Byzantine consensus algorithm.
\end{abstract}

\begin{CCSXML}
<ccs2012>
<concept>
<concept_id>10003752.10003809.10010172</concept_id>
<concept_desc>Theory of computation~Distributed algorithms</concept_desc>
<concept_significance>500</concept_significance>
</concept>
</ccs2012>
\end{CCSXML}

\ccsdesc[500]{Theory of computation~Distributed algorithms}

\keywords{Fast Byzantine consensus, resilience, common-case latency}

\settopmatter{printfolios=true}
\maketitle

\section{Introduction} \label{sec:introduction}

\subsection{Fast Byzantine consensus}

\emph{Consensus}~\cite{pease1980reaching} is by far the most studied problem in distributed computing.
It allows multiple processes to unambiguously agree on a single value.
%[[PK 
%(e.g., a command to be executed next by a replicated state machine).
%
%]]
Solving consensus allows one to build a \emph{replicated state machine} by reaching agreement on each next command to be executed.
In other words, it allows a group of processes to act as a single correct process, despite crashes or even malicious  behaviour of some of them.
Having implemented the replicated state machine, one can easily obtain an implementation of any object with a sequential specification~\cite{lamport-smr,smr-tutorial}.
This makes consensus algorithms ubiquitous in practical distributed systems~\cite{chubby, zookeeper, bft-smart, hotstuff, hyperledger-fabric}: instead of implementing each service from scratch, software engineers often prefer to have a single highly optimized and well tested implementation of state machine replication.
Hence, it is crucial for consensus algorithms to be as efficient as possible.
In particular, it is desirable to minimize the degree of redundancy, i.e., the number of processes that execute the consensus protocol.
%\yan{[The italics suggest it is an important term, but we do not use it anymore. If the purpose is to emphasize and not to define, maybe use textbf instead?]}
%\todoA{Yes, you are right, I removed the \emph, Thanks.}

It has been proven~\cite{pease1980reaching} that in order for a consensus algorithm to work despite the possibility of a malicious adversary taking control over a single participant process (the behaviour we call ``\emph{a Byzantine failure}''), the minimum of $4$ processes is required, assuming that the network is \emph{partially synchronous} (i.e., mostly reliable, but may have periods of instability).
More generally, in order to tolerate $f$ Byzantine failures, the minimum of $3f+1$ processes in total is required.

As complexity of partially synchronous consensus protocols cannot be grasped by the worst-case latency, we typically focus on the \emph{common case} when all correct processes agree on the same correct \emph{leader} process.   
Most Byzantine fault-tolerant consensus algorithms with optimal resilience ($3f+1$ processes) require at least three message delays in order to reach agreement in the common case~\cite{pbft,hotstuff,tendermint}.
In contrast to this, if we assume that the processes may fail only by crashing (i.e., if a corrupted process simply stops taking steps as opposed to actively trying to break the system), many consensus algorithms with optimal resilience ($n=2f+1$ for crash faults only) reach agreement within just two communication steps~\cite{paxos, viewstamped-replication,viewstamped-replication-revisited}.
This gap leads to an extensive line of research~\cite{kursawe2002optimistic, fab-paxos, zyzzyva, revisiting-fast-bft-1, revisiting-fast-bft-2, sbft} towards \emph{fast Byzantine consensus algorithms} -- the class of Byzantine fault-tolerant partially-synchronous consensus algorithms that can reach agreement with the same delay as their crash fault-tolerant counterparts.

Kursawe~\cite{kursawe2002optimistic} was the first to propose a Byzantine consensus algorithm that could decide a value after just two communication steps.
However, the ``optimistic fast path'' of the algorithm works only if there are no failures at all.
Otherwise, the protocols falls back to randomized consensus with larger latency.

Martin and Alvisi~\cite{fab-paxos} proposed the first fast Byzantine consensus algorithm that is able to remain ``fast'' even in presence of Byzantine failures.
The downside of their algorithm compared to classic algorithms such as PBFT~\cite{pbft} is that it requires $5f+1$ processes in order to tolerate $f$ Byzantine failures (as opposed to $3f+1$).
They present a generalized version of their algorithm that requires $3f+2t+1$ processes in order to tolerate $f$ failures and remains fast when the actual number of failures does not exceed $t$ ($t \le f$).
%
% The case of $f=1$ is of special interest because it defines the minimum degree of redundancy needed to avoid having a single point-of-failure in a system.
% Most practical deployments of consensus algorithms operate on a small number of processes ($f=1$ or $f=2$).
%
It is then argued that $3f+2t+1$ is the optimal number of processes for a fast $f$-resilient Byzantine consensus algorithm. 

% However, after careful examination, we found some mistakes in the proof of the lower bound that limit its applicability to a special class of algorithms that distinguish between two types of processes: proposers and acceptors.
% We discuss it in more detail in Section~\ref{sec:lower-bound}.}

\subsection{Our contributions}

We spot an oversight in the lower bound proof by Martin and Alvisi~\cite{fab-paxos}. 
As we show in this paper, the lower bound of $3f+2t+1$ processes only applies to a restricted class of algorithms that assume that the processes are split into disjoint sets of \emph{proposers} and \emph{acceptors}.
%processes to tolerate $f$ Byzantine failures as opposed to the common belief that $5f+1$ processes are required.

% \yan{[What about "oversight" instead of "mistake?" (and for maybe some other instances of "mistake". At least in American English "mistake" is a jarring thing to say; I don't know if this is common in literature)]} 
% \todoA{I agree, ``oversight'' sounds much better.}

Surprisingly, if the roles of proposers and acceptors are performed by the same processes,
there exists a fast $f$-resilient Byzantine consensus protocol that requires only $5f-1$ processes.
By adding a \atadd{PBFT-like} ``slow path''~\cite{kursawe2002optimistic, fab-paxos, revisiting-fast-bft-2}
%using the techniques described in prior works
%~\cite{kursawe2002optimistic, fab-paxos, revisiting-fast-bft-2},
\atreplace{one}{we} can obtain a generalized version of the protocol that requires \atreplace{$n=\max\{3f+2t-1, 3f+1\}$}{$n=3f+2t-1$} processes, tolerates up to $f$ Byzantine failures, and remains fast (terminates in two message delays) as long as the number of failures does not exceed $t$ (for any $t: 1 \le t \le f$).
In particular, to the best of our knowledge, this is the first protocol that is able to remain fast in presence of a single Byzantine failure ($t=1$) while maintaining the optimal resilience ($n=3f+1$).

We show that $n=3f+2t-1$ is the true lower bound for the number of processes required for a fast Byzantine consensus algorithm.
While for large values of $f$ and $t$ the difference of just two processes \atreplace{appears}{may appear} insignificant, it can become crucial in smaller-scale systems.
In particular, to avoid a single point of failure in a system while maintaining the optimal latency ($f=t=1$), the protocol requires only $4$ processes (optimal for any partially synchronous Byzantine consensus protocol), as opposed to $6$ required by previous protocols.

\subsection{Roadmap}

In Section~\ref{sec:preliminaries}, we state our model assumptions and recall the problem of consensus.
We describe our fast Byzantine consensus protocol in Section~\ref{sec:algorithm}.
In Section~\ref{sec:lower-bound}, we discuss the applicability of the previously known lower bound and prove that $3f+2t-1$ is the true lower bound on the number of processes for a fast Byzantine consensus algorithm. 
% We discuss \yan{\st{the}} related work in Section~\ref{sec:related-work}
% and conclude the paper with further discussion on our results as well as on some open questions in Section~\ref{sec:conclusion}.
%
\atreplace{We conclude by discussing related work in Section~\ref{sec:related-work}.}{We discuss related work in Section~\ref{sec:related-work} and provide the details on the implementation of the generalized version of the protocol in Appendix~\ref{app:generalized-version}}.
\section{Preliminaries} \label{sec:preliminaries}

\subsection{Model assumptions}

We consider a set $\AllProc$ of $n$ \emph{processes}, $p_1, \ldots, p_n$.
Every process is assigned with an \emph{algorithm} (deterministic state machine) that it is expected to follow.
A process that \emph{deviates} from its algorithm (we sometimes also say \emph{protocol}), by performing a step that is not prescribed by the algorithm or prematurely stopping taking steps, is called \emph{Byzantine}. 

We assume that in an execution of the algorithm, up to $f$ processes can be Byzantine.
We sometimes also consider a subset of executions in which up to $t\leq f$ processes are Byzantine. 
Non-Byzantine processes are called \emph{correct}.

%
%An unknown subset of processes $\ByzProc \subset \Pi$ is under control of a malicious \emph{adversary}. 
%We say that processes in $\ByzProc$ are \emph{Byzantine}. 
%Processes in $\AllProc \setminus \ByzProc$ are said to be \emph{correct}.

The processes communicate by sending messages across \emph{reliable} (no loss, no duplication, no creation) point-to-point communication channels.  
More precisely, if a correct process sends a message to a correct process, the message is eventually received. 
%\todoA{I believe, some words are missing in the previous sentence.}
%
The adversary is not allowed to create messages or modify messages in transit.
The channels are authenticated: the sender of each received message can be unambiguously identified.
\atadd{In the proofs of correctness, for simplicity} we assume that there exists a global clock, not accessible to the processes.  

Every process is assigned with a public/private key pair. %
Every process knows the identifiers and public keys of every other process.
The adversary is computationally bounded so that it is unable to compute private keys of correct processes. 

We assume a \emph{partially synchronous} system\atadd{~\cite{DLS88}}: there exists a known \emph{a priori} bound on message delays $\Delta$ that holds \emph{eventually}: there exists a time after which every message sent by a correct process to a correct process is received within $\Delta$ time units.
This (unknown to the processes) time when the bound starts to hold is called \emph{global stabilization time} (GST).
%
% Also, for simplicity, we neglect the time spent on local computations and we assume that \atreplace{the processes are completely synchronous: each correct process has a local clock that is perfectly synchronized with the clocks of other correct processes~\cite{DLS88}}{the processes have access to loosely synchronized clocks}.
\atrev{We assume that the processes have access to loosely synchronized clocks and,
for simplicity, we neglect the time spent on local computations.}

\subsection{The consensus problem}

Each process $p \in \AllProc$ is assigned with an input value $\xinput_p$.
%
% \needsrevP{A correct process can irrevocably \emph{decide} on a value $x$ by triggering the callback $\fDecide(x)$.}
%
% \todoA{I find this version a bit more formal: ``At most once in any execution, a correct process can \emph{decide} on a value $x$ by triggering the callback $\fDecide(x)$.''}
%
At most once in any execution, a correct process can \emph{decide} on a value $x$ by triggering the callback $\fDecide(x)$.

Any infinite execution of a consensus protocol must satisfy the following conditions:
%
% \todoA{The word ``infinite'' may be a bit scary for people who are not used to deal with such formalities.}
%
\begin{description}
    \item[\Liveness:] Each correct process must eventually decide on some value;
    \item[\Consistency:] No two correct processes can decide on different values;
    \item[\Validity:] We consider two flavors of this property: \begin{description}
        \item[\WeakValidity:] If all processes are correct and propose the same value, 
         then only this value can be decided on;
        \item[\ExtendedValidity:] If all processes are correct, then only a value proposed by some process can be decided on.
    \end{description}
\end{description}
Note that {\extendedValidity} implies {\weakValidity}, but not vice versa.
\atremove{(i.e., {\extendedValidity} is \emph{strictly stronger} than {\weakValidity})}.
Our algorithm solves consensus with {\extendedValidity},
while our matching lower bound holds even for consensus with {\weakValidity}.
%
% Our matching lower bound assumes consensus with {\weakValidity}.
% %
% \todoA{How about ``Our matching lower bound holds even for consensus with {\weakValidity}''? I find it a bit clearer.}

\section{Fast Byzantine consensus with optimal resilience} \label{sec:algorithm}

% \begin{figure}[htbp]
%   \centering
%   \includesvg{svg/normal-case-v2.svg}
%   \caption{svg image}
% \end{figure}

In this section, we present our fast Byzantine consensus algorithm, assuming a system of $n \ge 5f-1$ processes and discuss its \atremove{simple} generalization for $n \ge 3f+2t-1$ processes.

% The algorithm proceeds in views. 
% %
% Every view $v$ is associated with a leader $p_{(v \text{ mod } n) + 1}$.
% %
% We assume that every message related to a view carries the view number, omitted in protocol messages described below.
% %

The algorithm proceeds in numbered views.
Each process maintains its current \textit{view number}, 
and each view is associated with a single \textit{leader} process 
by an agreed upon map $\fleader: \mathbb{Z}_{>0} \to \AllProc$, 
$\fleader(v) = p_{(v \text{ mod } n) + 1}$.
When all correct processes have the same current view number $v$, we say that process $\fleader(v)$ is \emph{elected}.

The processes execute a \emph{view synchronization protocol} in the background.
We do not provide explicit implementation for it because any implementation from the literature is sufficient~\cite{pbft,quadratic-round-synchronization,linear-round-synchronization}.

The view synchronization protocol must satisfy the following three properties:
\begin{itemize}
    \item The view number of a correct process is never decreased;

    \item In any infinite execution, \atreplace{some}{a} correct leader is elected an infinite number of times. In other words, at any point in the execution, there is a moment in the future when a correct leader is elected;
    
    \item If a correct leader is elected after GST, no correct process will change its view number for the time period of at least $5\Delta$.
\end{itemize}
Initially, the view number of each process is $1$.
Hence, process $\fleader(1)$ is elected at the beginning of the execution.
If $\fleader(1)$ is correct and the network is synchronous from the beginning of the execution (GST$=0$),
our protocol guarantees that all correct processes decide some value before any process changes its view number.

The first leader begins with sending a {\MPropose} message with its current decision estimate to all processes.
If a process accepts the proposal, it sends an {\MAck} message to every other process.
A process \emph{decides} on the proposed value once it receives {\MAck} messages from \atreplace{a quorum ($n-f$) of processes}{$n - f$ processes}. 
Therefore, as long as the leader is correct and the correct processes do not change their views prematurely, every correct process decides after just two communication steps. 

When correct processes change their views, they engage in the \emph{view change protocol},
helping the newly elected leader to obtain a \emph{safe} value to propose equipped with a \atrev{\emph{progress certificate}---a cryptographic certificate that confirms that the value is safe}. (A value is safe in a view if no other value was or will ever be decided in a smaller view).

Our view change protocol consists of two phases: first, the new leader collects \emph{votes} from processes and makes a decision about which value is safe,
and, second, the leader asks $2f+1$ other processes to confirm with a digital signature that they agree with the leader's decision.
%
% \atnote{Is it actually not typical for other protocols?
% The paper about PBFT also mentions the problem of unbounded certificate sizes.}
%
This second phase, not typical for other consensus protocols, \atreplace{is used to ensure}{ensures} that \atreplace{the certificate sizes do not grow indefinitely
in case of a long period of asynchrony}{the size of the progress certificate is limited}.

Once the view change protocol is completed, the new leader \atreplace{runs the normal case protocol}{proposes a safe value}: it sends a {\MPropose} message to every process and waits for $n-f$ acknowledgments.

% \needsrevA{When a correct process enters a view where it is the leader, it sends a {\MPropose} message with its current decision estimate to all processes.}
% %
% If a process accepts the proposal, it sends an {\MAck} message to every other process.
% A process \emph{decides} on the proposed value, once it receives {\MAck} messages from a quorum ($n-f$) of processes. 
% %
% Therefore, as long as the leader is correct and not timed out by correct processes, every correct process decides after two communication steps. 

% If the current leader is timed out or caught in \emph{equivocation} (proposing different values in the same view), the processes engage in a \emph{view change} protocol. 
% %
% The leader of the next view then needs to ensure it will not miss any value that \emph{could have been} decided in earlier views. 
% %
% It collects \emph{votes} from a quorum of processes and forms a new proposal.

Below we describe the \atreplace{normal case protocol}{protocol for proposing values} and the view change protocol in more detail. 

\subsection{Proposing a value} \label{subsec:algorithm-normal-case}

We say that a value $x$ is \emph{safe in a view $v$} if no value other that $x$ can be decided in a view $v'<v$. 

\atreplace{A}{The} view change protocol (Section~\ref{subsec:algorithm-view-change}) provides the new leader with a value $\xHat$ and a \emph{\atadd{progress} certificate} $\sigmaHat$ ensuring that $\xHat$ is safe in the current view $v$.
The \atadd{progress} certificate can be independently verified by any process.
%
% In the first view ($v=1$), any value is safe ($\xHat = \xinput_{\fleader(1)}$) and there is no need for such a certificate ($\sigmaHat=\bot$).
\atrev{In the first view ($v=1$), any value is safe and the leader simply proposes its own value ($\xHat = \xinput_{\fleader(1)}$ and $\sigmaHat = \bot$).}

To propose a value \atremove{in the normal case} (illustrated in Figure~\ref{subfig:normal-case-example}), the leader $p$ sends the message $\MPropose(\xHat,v,\sigmaHat,\tauHat)$ to all processes, where $\tauHat=\fsign_p((\MPropose, \xHat,v))$.

When a process receives the proposal for the first time in a given view and ensures that $\sigmaHat$ and $\tauHat$ are valid, 
it sends an {\MAck} message containing the proposed value to every process.
Once a process receives $n-f$ acknowledgments for the same pair $(\xHat, v)$, it decides on the proposed value $\xHat$.

\begin{figure}

    \begin{subfigure}[t]{\linewidth}
        \centering
        \includesvg{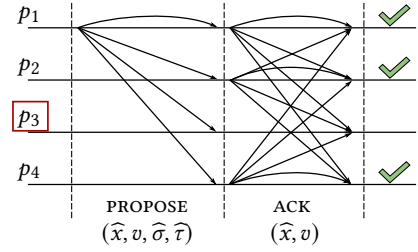}
        \caption{\atrev{Example of a correct process proposing value $\xHat$ in view $v$.
        $\sigmaHat$ is the progress certificate and 
        \mbox{$\tauHat = \fsign_{p_1}((\MPropose, \xHat, v))$}.}}
        % $\sigAck = \fsign_q((\MAck, \xHat, v))$, where $q$ is the identifier of the process that sends the {\MAck} message.}
        \label{subfig:normal-case-example}
    \end{subfigure}
    
    \begin{subfigure}[t]{\linewidth}
        \centering
        \includesvg{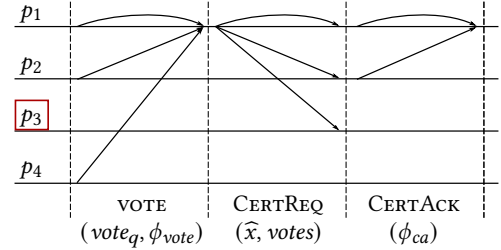}
        \caption{View change execution example. $\sigVote = \fsign_q((\MVote, \vote_q, v))$ and $\sigCertAck = \fsign_q((\MCertAck, \xHat, v))$, where $q$ is the identifier of the process that sends the message.}
        \label{subfig:view-change-example}
    \end{subfigure}
    
    \caption{Execution examples of our protocol.}
    \label{fig:execution-examples}
\end{figure}

\subsection{View change} \label{subsec:algorithm-view-change}

Every process $q$ locally maintains a variable $\vote_q$, an estimate of the value to be decided, in the form $(x,u,\sigma,\tau)$,
where $x$ is a value, $u$ is a view number, $\sigma$ is the progress certificate for value $x$ in view $u$, and $\tau$ is a signature for the tuple $(\MPropose, x,u)$ produced by $\fleader(u)$.
If $\vote_q = (x, u, \sigma, \tau)$, we say that process $q$ votes for ``value $x$ in view $u$''.
Initially, the variable $\vote_q$ has special value $\nilVote$.
When a correct process receives a {\MPropose} message from the leader of its current view for the first time, the process updates its vote by adopting the values from the {\MPropose} message (before sending the {\MAck} message back to the leader).
Note that once a correct \atreplace{replica}{process} changes its vote from $\nilVote$ to something else, it never changes the vote back to $\nilVote$.
We say that a vote is \emph{valid} if either it is equal to $\nilVote$ or both $\sigma$ and $\tau$ are valid with respect to $x$ and $u$.

Whenever a correct process $q$ changes its current view (let $v$ be the new view number), it sends the message $\MVote(\vote_q, \sigVote)$ to the leader of view $v$,
where $\sigVote = \fsign_q((\MVote, \vote_q, v))$.
When a correct \atreplace{replica}{process} finds itself to be the leader of its current view $v$, unless $v=1$, 
it executes the view change protocol (illustrated in Figure~\ref{subfig:view-change-example}).
First, it waits for $n-f$ valid votes and runs the selection algorithm to determine a safe value to propose ($\xHat$).
Then it communicates with other processes to create the \atadd{progress} certificate $\sigmaHat$.

\paragraph{Selection algorithm}

Let $\votes$ be the set of all valid votes received by the leader (with the ids and the signatures of the processes that sent these votes). 
Recall that $|\votes| \ge n-f$.
If all elements in $\votes$ are equal to $\nilVote$, then the leader simply selects its own input value ($\xinput_{\fleader(v)}$).

Otherwise, let $w$ be the highest view number contained in a valid vote.
If there is only one value $x$ such that there is a valid vote $(x,w,*,*)$ in $\votes$, then $x$ is selected.

Let us now consider the case when there are two or more values with valid votes in view $w$.
As a correct leader issues at most one proposal in its view, the only reason for two different valid votes $m_1=(x_1,w,\sigma_1,\tau_1)$ and $m_2=(x_2,w,\sigma_2,\tau_2)$ to exist is that the leader $q$ of view $w$ is Byzantine (we say that process $q$ has \emph{equivocated}).
We can then treat $\gamma=(m_1,m_2)$ as an undeniable evidence of $q$'s misbehavior.
As we have at most $f$ faulty processes, the leader can then wait for $n-f$ votes \emph{not including $q$'s vote}
(i.e., the leader may need to wait for exactly one more vote if $|\votes| = n-f$ and $\votes$ contains a vote from $q$).
After receiving this additional vote, it may happen that $w$ is no longer the highest view number contained in a valid vote.
In this case, the selection algorithm needs to be restarted.

Otherwise, if $w$ remains the highest view number contained in a valid vote, 
let $\votes'$ denote the $n-f$ valid votes from processes other than $q$.
We have two cases to consider:

\begin{enumerate}
    \item[(1)] If there is a set $V \subset \votes'$ of $2f$ valid votes for a value $x$, then $x$ is selected;

    \item[(2)] If no such value $x$ is found, then any value is safe in view $v$. 
    In this case, the leader simply selects its own input value ($\xinput_{\fleader(v)}$).
\end{enumerate}

% \paragraph{Certificate creation}
\paragraph{Creating the progress certificate}

Let $\xHat$ be the value selected by the selection algorithm.
As we prove in Section~\ref{subsec:consistency-proof}, if the leader honestly follows the selection algorithm as described above,
the selected value $\xHat$ will be safe in the current view $v$.
However, the leader also needs to create a certificate $\sigmaHat$ that will prove to all other processes that $\xHat$ is safe.

The naive way to do so is to simply let $\sigmaHat$ be the set of all valid votes received by the leader.
Any process will be able to verify the authenticity of the votes (by checking the digital signatures)
and that the leader followed the selection algorithm correctly (by simulating the selection process locally on the given set of votes).

However, the major problem with this solution is that the certificate sizes will grow without bound in long periods of asynchrony.
%
% Recall that each vote contains a certificate.
% If each certificate $\sigmaHat$ consisted of $n-f$ votes, then each vote would contain a certificate of its own, which, in turn, would consist of $n-f$ votes cast in an earlier view, and so on.
\atrev{Recall that each vote contains a progress certificate from an earlier view.
If each progress certificate consisted of $n-f$ votes, then each vote would contain a certificate of its own, which, in turn, would consist of $n-f$ votes from an earlier view, and so on.}
If this naive approach is implemented carefully, the progress certificate size (and, hence, the certificate verification time) 
will be linear with respect to the current view number.
While it may be sufficient for some applications (e.g., if long periods of asynchrony are assumed to never happen),
a solution with bounded certificate size would be much more appealing.

In order to bound the \atadd{progress} certificate size, we add an additional round-trip to the view change protocol.
The leader sends the votes alongside the selected value $\xHat$ to at least $2f+1$ different processes and waits for $f+1$ signed
confirmations.
The certificate $\sigmaHat$ is the set of $f+1$ signatures from different \atreplace{replicas}{processes} for the tuple $(\MCertAck, \xHat, v)$.
Intuitively, since there are at most $f$ Byzantine processes in total, it is sufficient to present signatures from $f+1$ \atreplace{replicas}{processes}
to prove that at least one correct \atreplace{replica}{process} verified that the leader performed the selection algorithm correctly
and, hence, that $\xHat$ is safe in view $v$.
As a result, the size of a protocol message does not depend on the view number.

% \subsection{Optimizations}

% There is one notable problem with the protocol as described so far:
% in case of a long period of asynchrony, message sizes can grow without limit.
% %
% The problem is with the certificate $\sigmaHat$.
% Its correctness is based on the correctness of a certificate $\sigma$ that confirms that value $x$ was safe in view $w$.
% Moreover, $\sigma$ itself must contain a certificate for an even earlier view, and so on.
% %
% This means that the size of the certificate can grow linearly with the current view number.
% %
% In case of a long period of asynchrony, when a correct leader cannot be elected for a long time,
% this can cause the protocol to send very long chains of certificates.

% There is a simple technique that solves this problem at the expense of the latency in the
% To prevent this from happening, a simple technique can be employed.
% We can add an additional

% It consists

% \subsection{Execution example}

% \needsrevA{[Should we add an execution example?]}

\subsection{Correctness proof} \label{subsec:consistency-proof}

It is easy to see that the protocol satisfies the {\liveness} property of consensus:
once a correct leader is elected after GST, there is nothing to stop it from driving the protocol to \atreplace{termination}{completion}.
The {\extendedValidity} property is immediate.
Hence, in this section, we focus on {\consistency}.
% More specifically, in Lemmas~\ref{lem:no-equivocation-safety},~\ref{lem:equivocation-safety-part-1}, and~\ref{lem:equivocation-safety-part-2}, we prove that in all three cases the leader
We show that a correct leader always chooses a safe value in the view change protocol.
%
% Intuitively, since the certificate simply consists of votes cast by the processes,
% a Byzantine leader cannot create a certificate for pair $(x, v)$ unless value $x$ is, indeed, safe in view $v$.
%
% Any correct process can simply simulate the decision process and make sure that it arrives at the same conclusion.

% \begin{lemma} \label{lem:consistency-in-one-view}
% At most one value can be decided in any given view.
% \end{lemma}
% \begin{proof}
% This follows from a simple quorum intersection property:
% by the pigeonhole principle, assuming that $n > 3f$, any two sets of $n-f$ processes intersect in at least one correct process $(\text{it is sufficient to verify that }2(n-f) - n > f)$.
% %
% Since each process acknowledges at most one value in any given view and in order for a value to be decided, at least $n-f$ processes need to acknowledge it, no two distinct values can be decided in the same view.
% \end{proof}

Our proofs are based on the following three quorum intersection properties (recall that $n \ge 5f-1$):
\begin{enumerate}[label=(QI\arabic*)]
    \item \label{prop:simple-quorum-intersection}
    \textbf{Simple quorum intersection:} 
    any two sets of $n-f$ processes intersect in at least one correct process.
    This follows from the pigeonhole principle. 
    It is sufficient to verify that $2(n-f) - n \ge f + 1$,
    which is equivalent to $n \ge 3f + 1$ and holds for any $n \ge 5f-1$ assuming that $f \ge 1$;
    
    \item \label{prop:quorum-intersection:n-f:n-f}
    \textbf{Equivocation quorum intersection \#1:} 
    if $Q_1 \subset \AllProc$ such that $|Q_1| = n-f$ and
    $Q_2 \subset \AllProc$ such that $|Q_2| = n-f$ and there are at most $f-1$ Byzantine processes in $Q_2$,
    then $Q_1 \cap Q_2$ contains at least $2f$ correct processes.
    Again, by the pigeonhole principle, it is sufficient to verify that $2(n-f) - n \ge (f-1) + 2f$,
    which is equivalent to $n \ge 5f-1$;
    
    \item \label{prop:quorum-intersection:n-f:2f}
    \textbf{Equivocation quorum intersection \#2:}
    if $Q_1 \subset \AllProc$ such that $|Q_1| = n-f$ and
    $Q_2 \subset \AllProc$ such that $|Q_2| = 2f$ and there are at most $f-1$ Byzantine processes in $Q_2$,
    then $Q_1 \cap Q_2$ contains at least one correct process.
    It is sufficient to verify that $(n-f) + 2f - n \ge (f-1) + 1$, which holds for any \atreplace{values of $n$ and $f$, $n\ge 2f$}{$n \ge 2f$}.
\end{enumerate}

% \begin{lemma} \label{lem:no-decisions-in-w}

% \end{lemma}
% \begin{proof}
% This follows from a simple quorum intersection property:
% by the pigeonhole principle, assuming that $n > 3f$, any two sets of $n-f$ processes intersect in at least one correct process $(\text{it is sufficient to verify that }2(n-f) - n > f)$.
% %
% Since each process acknowledges at most one value in any given view and in order for a value to be decided, at least $n-f$ processes need to acknowledge it, no two distinct values can be decided in the same view.
% \end{proof}

Recall that a value \emph{$x$ is safe in a view $v$} if no value other than $x$ can be decided in a view $v'<v$.
%
% We show then that any value $x$ decided in view $u$ is safe in view $u$ and, thus, no two different values other than $x$ can ever be decided.
To prove that no two different values can be decided in our algorithm, we show that if $x$ and $v$ are equipped with a valid certificate $\sigma$, then $x$ is safe in $v$.

But let us first address the corner case when the leader of a view receives no valid votes other than $\nilVote$.

\begin{lemma} \label{lem:safe-if-all-nil}
If the leader of view $v$ receives $\nilVote$ from $n-f$ different processes during the view change, then any value is safe in $v$.
\end{lemma}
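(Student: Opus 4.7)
The plan is to argue by contradiction. Suppose the leader of view $v$ receives $\MVote$ messages carrying $\nilVote$ from a set $Q_1$ of $n-f$ distinct processes, yet some value $y$ is decided in some view $v' < v$. I will derive a contradiction by using \ref{prop:simple-quorum-intersection} together with the non-reversion invariant for correct processes' votes.

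First I would unpack the decision in view $v'$: by the deciding rule, some process received $\MAck$ messages for $(y,v')$ from a set $Q_2$ of $n-f$ distinct processes. Each correct process in $Q_2$ must have, prior to sending its $\MAck$, received a $\MPropose(y,v',\cdot,\cdot)$ while in view $v'$ and therefore updated its local $\vote$ from $\nilVote$ (or from some earlier non-nil value) to a value whose view component is at least $v'$. Because once a correct process changes its vote from $\nilVote$ it never reverts, every correct process in $Q_2$ holds a non-nil $\vote$ from that moment on.

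Next I would invoke \ref{prop:simple-quorum-intersection}: the two quorums $Q_1$ and $Q_2$, both of size $n-f$, intersect in at least one correct process $p$. Since $p \in Q_2$, $p$'s local vote became non-nil while $p$ was in view $v'$. Since $p \in Q_1$ and the messages are authenticated and signed by $p$, $p$ genuinely sent $\MVote(\nilVote,\sigVote)$ to the leader of view $v$. By the protocol, such a message is sent only upon $p$ entering a new view, so $p$'s $\vote$ must have been $\nilVote$ at the moment $p$ entered view $v$. Because $v'<v$ and correct processes only advance their view numbers monotonically, $p$ was in view $v'$ strictly before entering view $v$; combined with the non-reversion invariant, $p$'s vote at the moment of entering view $v$ must have been non-nil, contradicting the fact that $p$ sent $\nilVote$ in $Q_1$.

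The main obstacle is the timing/ordering argument in the last step: making precise that $p$'s $\MAck$ in view $v'$ really does precede $p$'s $\MVote$ to the leader of view $v$, so that the non-reversion invariant applies. This reduces to the fact that a correct process can only $\MAck$ while in its current view and that its current view only grows, so any event taking place in view $v$ strictly follows any event taking place in view $v'<v$ at the same process. Once this is stated cleanly, the contradiction is immediate and we conclude that no value could have been decided in any view $v'<v$, so every value is safe in $v$.
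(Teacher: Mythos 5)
Your proposal is correct and follows essentially the same route as the paper's proof: contradiction via the simple quorum intersection property \ref{prop:simple-quorum-intersection}, the observation that a correct process in the intersection must have acknowledged in the smaller view before voting in view $v$ (since views only increase and messages are tied to the current view), and the non-reversion of votes from $\nilVote$. The only difference is presentational---you spell out the ordering argument in more detail than the paper does.
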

\begin{proof}
Suppose, by contradiction, that at some point of the execution some value $y$ is decided in a view $w'$ smaller than $v$.
Consider the set $Q_1 \subset \Pi$ of $n-f$ processes that acknowledged value $y$ in $w'$.
Consider also the set $Q_2 \subset \Pi$ of $n-f$ processes that sent $\nilVote$ to the leader of view $v$.
By property~\ref{prop:simple-quorum-intersection},
$Q_1 \cap Q_2$ contains at least one correct process.

A correct process only sends messages associated with its current view and it never decreases its current view number.
Hence, it cannot send the vote in view $v$ before sending the acknowledgment in view $w'$.
If the correct process acknowledged value $y$ in $w'$ before sending the vote to the leader of view $v$,
the vote would have not been $\nilVote$---a contradiction.
\end{proof}

We now proceed by induction on view $v$. 
The base case ($v=1$) is immediate: by convention, any value is safe in view $1$.

Now consider a view $v>1$ and assume that for all views $u<v$, any value equipped with a valid certificate for view $u$ is safe in $u$.
Let $w$ denote the highest view number contained in a valid vote received by the leader of view $v$ during the view change protocol.

\begin{lemma} \label{lem:no-decisions-between-views}
No value was or will ever be decided in any view $w'$ such that $w < w' < v$.
\end{lemma}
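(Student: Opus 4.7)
The plan is to argue by contradiction: suppose some value $y$ is decided in a view $w'$ with $w < w' < v$. The goal is to produce a correct process whose valid vote received by the leader of view $v$ carries a view number strictly greater than $w$, contradicting the definition of $w$.

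First I would identify the two relevant quorums. A decision on $y$ in view $w'$ means there is a set $Q_1$ of $n-f$ processes that sent an {\MAck} for $(y, w')$. The leader of view $v$, during the view change, received a set $Q_2$ of $n-f$ valid votes. By the simple quorum intersection property~\ref{prop:simple-quorum-intersection}, $Q_1 \cap Q_2$ contains at least one correct process $p$.

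The central step is tracking $p$'s local state. To send an {\MAck} in view $w'$, process $p$ must first have received a valid {\MPropose} message in view $w'$ and, per the protocol, updated $\vote_p$ to a tuple of the form $(y, w', \sigma, \tau)$ \emph{before} emitting the acknowledgment. Afterwards, $p$ only modifies $\vote_p$ upon accepting another valid {\MPropose}; since a correct process only processes proposals for its current view and its current view is monotonically non-decreasing, any subsequent update to $\vote_p$ can only raise the view-number component. Consequently, when $p$ eventually sent its {\MVote} message to the leader of view $v$, the enclosed vote was valid (it is of the form $(x, u, \sigma, \tau)$ with valid $\sigma$ and $\tau$ because correct processes only adopt proposals backed by valid progress certificates) and satisfied $u \ge w'$.

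This yields the contradiction: the leader of view $v$ received from $p$ a valid vote with view number $u \ge w' > w$, contradicting the maximality of $w$ among view numbers appearing in valid votes received by that leader. Since the assumption that some $w'$ with $w < w' < v$ carried a decision is refuted, no value was or will ever be decided in any such $w'$. The main obstacle is the clean justification of monotonicity of the vote's view-number component for correct processes; once that is stated, the quorum-intersection argument is routine.
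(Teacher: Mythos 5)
Your proof is correct and follows essentially the same route as the paper's: intersect the $n-f$ acknowledgers of $y$ in $w'$ with the $n-f$ voters via property~\ref{prop:simple-quorum-intersection}, then use view-number monotonicity of a correct process to conclude its vote carried a view number at least $w'$, contradicting the maximality of $w$. The extra care you take in tracking the monotone evolution of $\vote_p$ is a fine (slightly more explicit) rendering of the paper's observation that the acknowledgment in $w'$ must precede the vote sent in view $v$.
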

\begin{proof}
Suppose, by contradiction, that at some point of the execution some other value $y$ is decided in $w' (w < w' < v)$. 
Let $Q_1 \subset \Pi$ be the set of $n-f$ processes that acknowledged value $y$ in $w'$
and let $Q_2 \subset \Pi$ be the set of $n-f$ processes that sent their votes to the leader of view $v$.
By property~\ref{prop:simple-quorum-intersection},
$Q_1 \cap Q_2$ contains at least one correct process.

A correct process only sends messages associated with its current view and it never decreases its current view number.
Hence, it cannot send the vote in view $v$ before sending the acknowledgment in view $w'$.
If the correct process acknowledged value $y$ in $w'$ before sending the vote to the leader of view $v$,
the vote would have contained a view number at least as large as $w'$.
This contradicts the choice of $w$ to be the maximal view contained in a valid vote received by the leader.
\end{proof}

\begin{lemma} \label{lem:no-equivocation-safety}
\atreplace{If there is a unique value $x$ with a valid vote $(x,w,\sigma,\tau)$, then $x$ is safe in view $v$.}{If among the received votes there is only one value $x$ such that there is a valid vote for $x$ in view $w$ $(x, w, \sigma, \tau)$, then $x$ is safe in view $v$.}
\end{lemma}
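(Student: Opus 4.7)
The plan is to combine the inductive hypothesis, \Cref{lem:no-decisions-between-views}, and a quorum-intersection argument to cover the three separate regimes of possible decisions: views strictly below $w$, view $w$ itself, and views strictly between $w$ and $v$. Since by definition $(x,w,\sigma,\tau)$ is a valid vote, $\sigma$ is a valid progress certificate for $x$ in view $w$; by the inductive hypothesis, $x$ is therefore safe in $w$, which immediately rules out a decision on any value other than $x$ in any view $u < w$. \Cref{lem:no-decisions-between-views} handles the range $w < w' < v$. So the entire argument reduces to ruling out a decision on some $y \neq x$ in view $w$ itself.

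For that remaining case, I would argue by contradiction. Suppose $y \neq x$ is decided in view $w$. Let $Q_1$ be the $n-f$ processes that sent $\MAck$ messages for $(y, w)$ and $Q_2$ the $n-f$ processes whose votes were received by the leader of view $v$. By \ref{prop:simple-quorum-intersection}, $Q_1 \cap Q_2$ contains some correct process $q$. Before acknowledging $y$ in $w$, $q$ updated $\vote_q$ to $(y, w, \sigma_y, \tau_y)$. The key observation is that the view number stored in $\vote_q$ is non-decreasing for a correct process: $\vote_q$ is only updated upon receipt of a \MPropose\ message in $q$'s current view, and current view numbers are monotone. Hence, by the time $q$ sends its vote to the leader of $v$, the vote it sends is valid and carries some view number $w'' \ge w$.

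By maximality of $w$, in fact $w'' = w$. Since the only value with a valid vote in view $w$ received by the leader is $x$, the value in $q$'s vote must be $x$, not $y$. This contradicts $y \neq x$ and concludes the proof that $x$ is safe in $v$.

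The main obstacle is stating cleanly why $q$'s locally stored vote cannot ``downgrade'' to a view $< w$ or switch to some other value in a view $\le w$ between the moment $q$ acknowledged $y$ in view $w$ and the moment $q$ sent the vote to the leader of view $v$; this relies on the monotonicity of current view numbers together with the rule that $\vote_q$ is only overwritten from a \MPropose\ in $q$'s current view, and it is worth pointing out explicitly since it is used implicitly throughout the consistency argument.
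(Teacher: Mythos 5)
Your proof is correct and follows essentially the same route as the paper's: the induction hypothesis rules out views below $w$, Lemma~\ref{lem:no-decisions-between-views} rules out views strictly between $w$ and $v$, and property~\ref{prop:simple-quorum-intersection} applied to the acknowledgers of $y$ in $w$ and the $n-f$ voters yields a correct process whose vote must either exceed $w$ (contradicting maximality) or carry $y$ in view $w$ (contradicting uniqueness of $x$). Your explicit remark on the monotonicity of the stored vote's view number only makes precise a step the paper leaves implicit.
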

\begin{proof}
Suppose, by contradiction, that at some point of the execution some other value $y$ is decided in a view $w'$ smaller than $v$.
By the induction hypothesis, $x$ is safe in $w$, and thus $w'$ cannot be smaller than $w$.
By Lemma~\ref{lem:no-decisions-between-views}, $w'$ cannot be larger than $w$.
Let us consider the remaining case $(w' = w)$.

%The proof is mostly identical to the proofs of Lemmas~\ref{lem:safe-if-all-nil} and~\ref{lem:no-decisions-between-views}.
%Nevertheless, we repeat it here for completeness.

Let $Q_1 \subset \AllProc$ be the set of $n-f$ processes that acknowledged value $y$ in $w$.
Let $Q_2 \subset \AllProc$ be the set of $n-f$ processes that sent their votes to the leader of view $v$.
By~\ref{prop:simple-quorum-intersection}, $Q_1 \cap Q_2$ contains at least one correct process.

A correct process only sends messages associated with its current view and it never decreases its current view number.
Hence, it cannot send the vote in view $v$ before sending the acknowledgment in view $w$.
If the correct process acknowledged value $y$ in $w$ before sending the vote to the leader of view $v$,
the vote would have contained either a view number larger than $w$ (which contradicts the maximality of $w$) 
or the value $y$ (which contradicts the uniqueness of $x$).
\end{proof}

\begin{lemma} \label{lem:equivocation-safety-part-1}
If the leader detects an equivocating process $q$
and receives at least $2f$ valid votes for a value $x$ in view $w$ from processes other than $q$, 
then $x$ is safe in view $v$.
\end{lemma}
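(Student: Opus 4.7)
I would prove this by contradiction. Suppose some value $y \ne x$ is decided in a view $w' < v$. The plan is to first triangulate $w'$ relative to $w$ and then handle the single remaining case by quorum intersection. Each of the $2f$ valid votes for $(x, w, \cdot, \cdot)$ received by the leader of view $v$ carries a valid progress certificate for $x$ in view $w$, so the induction hypothesis yields that $x$ is safe in $w$, ruling out $w' < w$. Lemma~\ref{lem:no-decisions-between-views} rules out $w < w' < v$, since after discarding $q$'s vote, $w$ is still the highest view number present in a valid vote received by the leader. Hence we must have $w' = w$, which is the only interesting case.

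For the case $w' = w$, let $Q_1$ denote the set of $n - f$ processes that acknowledged $y$ in view $w$, and let $Q_2$ be the set of $2f$ processes, all distinct from $q$, whose valid votes for $(x, w, \cdot, \cdot)$ were collected by the leader of view $v$. Since $q$ is Byzantine and $q \notin Q_2$, $Q_2$ contains at most $f - 1$ Byzantine processes, so property~\ref{prop:quorum-intersection:n-f:2f} guarantees a correct process $p \in Q_1 \cap Q_2$. Note that the choice of this quorum intersection property is exactly what the parameters of the lemma were designed to enable, and it is crucial that $q$ is excluded from $Q_2$.

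The heart of the argument, and what I expect to be the main obstacle, is extracting a contradiction from $p$'s dual role. Membership in $Q_1$ means that at the moment $p$ sent its $\MAck$ for $y$ in view $w$, its local variable $\vote_p$ equalled $(y, w, \cdot, \cdot)$, because a correct process updates its vote before acknowledging. Membership in $Q_2$, together with the unforgeability of $\sigVote$, forces $\vote_p$ to equal $(x, w, \cdot, \cdot)$ at the moment $p$ emitted its $\MVote$ to the leader of view $v$. However, a correct process adopts only the \emph{first} $\MPropose$ it receives in a given view, and any subsequent overwrite of $\vote_p$ must carry a strictly higher view number, so $p$'s vote can never legitimately contain both values paired with the same view number $w$. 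This is the desired contradiction.
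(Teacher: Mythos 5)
Your proof is correct and follows essentially the same route as the paper's: triangulate $w'=w$ via the induction hypothesis and Lemma~\ref{lem:no-decisions-between-views}, then apply property~\ref{prop:quorum-intersection:n-f:2f} to the $n-f$ acknowledgers and the $2f$ voters excluding $q$, and derive the contradiction from a correct process in the intersection having adopted its vote before acknowledging. Your final step merely spells out in more detail the paper's one-line observation that a correct process never acknowledges two different values in the same view.
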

\begin{proof}
Suppose, by contradiction, that at some point of the execution some other value ($y$) is decided in a view $w'$ smaller than $v$.
By the induction hypothesis, $x$ is safe in $w$, \atadd{and thus} $w'$ cannot be smaller than $w$.
By Lemma~\ref{lem:no-decisions-between-views}, $w'$ cannot be larger than $w$.
Let us consider the remaining case $(w' = w)$.

Let $Q_1 \subset \AllProc$ be the set of $n-f$ processes that acknowledged value $y$ in $w$.
Let \atreplace{$Q_2 \subset \AllProc$}{$Q_2 \subset \AllProc \setminus \{q\}$} be the set of $2f$ processes that cast votes for value $x$ in view $w$.
Since $q \notin Q_2$ and $q$ is provably Byzantine,
there are at most $f-1$ Byzantine processes in $Q_2$.
By~\ref{prop:quorum-intersection:n-f:2f}, there is at least one correct process in $Q_1 \cap Q_2$.
A correct process only adopts a vote before acknowledging the value from the vote and it never
acknowledges 2 different values in the same view.
Hence, $y = x$---a contradiction.
\end{proof}

\begin{lemma} \label{lem:equivocation-safety-part-2}
If the leader detects an equivocating process $q$
and does not receive $2f$ or more valid votes for any value $x$ in view $w$ from processes other than $q$, 
then any value is safe in $v$.
\end{lemma}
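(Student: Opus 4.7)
My plan is to argue by contradiction: assuming some value $y$ is decided in a view $w' < v$, I will derive a contradiction in each possible relationship between $w'$ and $w$, thereby showing that nothing can be decided in any view strictly below $v$, which is exactly what ``any value is safe in $v$'' means in the sense used throughout this section (cf. the proof of Lemma~\ref{lem:safe-if-all-nil}).

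The two easier cases will be $w < w' < v$ and $w' < w$. For $w < w' < v$ I will simply invoke Lemma~\ref{lem:no-decisions-between-views}. For $w' < w$ I will exploit the equivocation itself: by hypothesis the leader holds two valid votes $(x_1, w, \sigma_1, \tau_1)$ and $(x_2, w, \sigma_2, \tau_2)$ with $x_1 \neq x_2$, so $\sigma_1$ and $\sigma_2$ are valid progress certificates for $x_1$ and $x_2$ respectively in view $w$. The induction hypothesis applied at view $w < v$ then makes \emph{both} $x_1$ and $x_2$ safe in $w$, so the unique decided value $y$ at $w' < w$ would have to equal both of them, contradicting $x_1 \neq x_2$.

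The heart of the argument is the remaining case $w' = w$. Here I will follow the template of Lemma~\ref{lem:equivocation-safety-part-1} but swap in a stronger quorum-intersection ingredient. Let $Q_1$ be the set of $n-f$ processes that acknowledged $y$ in view $w$, and let $Q_2$ be the set of $n-f$ processes other than $q$ whose votes the leader collected in view $v$. Because $q$ is provably Byzantine and $q \notin Q_2$, at most $f-1$ processes in $Q_2$ are Byzantine, so property~\ref{prop:quorum-intersection:n-f:n-f} furnishes at least $2f$ correct processes in $Q_1 \cap Q_2$. For each such process $p$, the acknowledgment of $y$ in view $w$ updates $\vote_p$ to $(y, w, \ldots)$ and strictly precedes the vote $p$ later sends to the leader of view $v$; hence the vote $p$ transmits carries some view number $\geq w$. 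Maximality of $w$ among view numbers appearing in valid votes received by the leader pins this down to exactly $w$, and the single-acknowledgment-per-view discipline of correct processes pins the value to $y$. I therefore obtain $2f$ valid votes for $y$ in view $w$ from processes other than $q$, which contradicts the hypothesis of the lemma.

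The only step I expect to need real care is this last one: ruling out that a correct process in $Q_1 \cap Q_2$ sent a vote tagged with a view strictly greater than $w$ (which would still be consistent with having once acknowledged $y$ in $w$, but is excluded by the maximality of $w$ in the votes the leader received) or with a different value in view $w$ (excluded by correctness). Everything else reduces cleanly to Lemma~\ref{lem:no-decisions-between-views}, Lemma~\ref{lem:equivocation-safety-part-1}'s template, the inductive safety hypothesis, and properties~\ref{prop:simple-quorum-intersection} and~\ref{prop:quorum-intersection:n-f:n-f}.
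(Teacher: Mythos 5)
Your proposal is correct and follows essentially the same route as the paper's proof: the $w'<w$ case via the two conflicting certificates from the equivocation plus the induction hypothesis, the $w<w'<v$ case via Lemma~\ref{lem:no-decisions-between-views}, and the $w'=w$ case via property~\ref{prop:quorum-intersection:n-f:n-f} applied to the acknowledgers of $y$ and the $n-f$ voters excluding $q$. You merely spell out in more detail the final dichotomy (each of the $2f$ correct intersection processes must have voted either for $y$ in $w$ or in a higher view) that the paper states in one sentence.
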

\begin{proof}
Suppose, by contradiction, that at some point of the execution some value $y$ is decided in a view $w'$ smaller than $v$.
Let $m_1=(y_1,w,\sigma_1,\tau_1)$ and $m_2=(y_2,w,\sigma_2,\tau_2)$ be the two valid votes such that $y_1 \neq y_2$.
By the induction hypothesis \atadd{and the validity of certificate $\sigma_1$}, no value other than $y_1$ was or will ever be decided in a view smaller than $w$.
The same applies for value $y_2$.
\atreplace{Hence}{Since $y_1 \neq y_2$}, no value was or will ever be decided in a view smaller than $w$ (i.e., $w'$ is not smaller than $w$).
By Lemma~\ref{lem:no-decisions-between-views}, $w'$ is not larger than $w$.

Let us consider the remaining case $(w' = w)$.
Recall that the leader collects $n-f$ votes from processes other than $q$.
By~\ref{prop:quorum-intersection:n-f:n-f}, the leader would have received at least $2f$ votes \atadd{from processes other than $q$} for the value $y$ in view $w$ or at least one vote for a value in a view larger than $w$.
\end{proof}

\begin{theorem} \label{the:consistency}
The proposed algorithm satisfies the consistency property of consensus.
\end{theorem}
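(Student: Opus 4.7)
The plan is to first wrap up the induction that the preceding lemmas were clearly set up for, establishing the invariant: \emph{whenever a value $\xHat$ is equipped with a valid progress certificate for view $v$, $\xHat$ is safe in $v$.} Then consistency follows almost immediately.

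For the induction on $v$: the base case $v=1$ is vacuous since there is no smaller view. For the inductive step, a valid progress certificate for view $v$ consists of $f+1$ $\MCertAck$ signatures from distinct processes, so at least one signer is correct. That correct signer verified that the leader followed the selection algorithm on a set of $n-f$ valid votes. The selection algorithm deterministically falls into exactly one of the four configurations handled by Lemma~\ref{lem:safe-if-all-nil} (all votes are $\nilVote$), Lemma~\ref{lem:no-equivocation-safety} (a unique value $x$ in the maximal vote view $w$), Lemma~\ref{lem:equivocation-safety-part-1} (equivocation detected with $2f$ supporting votes), or Lemma~\ref{lem:equivocation-safety-part-2} (equivocation detected without $2f$ supporting votes). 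These lemmas call on the validity of embedded certificates from views $u < v$, which is exactly what the inductive hypothesis supplies; combined, they show $\xHat$ is safe in $v$.

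To derive consistency, suppose two correct processes decide $x$ and $y$ in views $u$ and $v$ respectively, and without loss of generality $u \le v$. If $u < v$, the decision in $v$ required the proposal to carry a valid progress certificate for view $v$; by the invariant just established, $y$ is safe in $v$, so no value other than $y$ could have been decided in $u$, forcing $x = y$. If $u = v$, each decision was triggered by $n-f$ matching $\MAck$ messages for $(x,v)$ and $(y,v)$ respectively; by quorum intersection~\ref{prop:simple-quorum-intersection}, at least one correct process belongs to both acknowledgment sets, but a correct process only adopts and acknowledges the \emph{first} valid proposal it sees in a given view, hence $x = y$.

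The main obstacle is the same-view case $u = v$, which is not covered by any of the prior lemmas (they are all phrased for decisions in strictly smaller views). Handling it cleanly requires pinning down the single-proposal-per-view discipline of a correct process from the handler in Section~\ref{subsec:algorithm-normal-case} and combining it with~\ref{prop:simple-quorum-intersection}; apart from that, the theorem is essentially a bookkeeping exercise over the four cases of the selection algorithm.
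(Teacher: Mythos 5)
Your proposal is correct and follows essentially the same route as the paper: the paper sets up the same induction on the view number around Lemmata~\ref{lem:safe-if-all-nil}--\ref{lem:equivocation-safety-part-2} and then closes consistency exactly as you do, reducing the cross-view case to safety of the higher-view value and the same-view case to~\ref{prop:simple-quorum-intersection} plus the one-acknowledgment-per-view discipline. Your only addition is to make explicit the step (left informal in the paper's ``Creating the progress certificate'' discussion) that the $f+1$ \MCertAck{} signatures guarantee a correct verifier of the selection algorithm, which is a faithful reading rather than a deviation.
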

\begin{proof}
Suppose, by contradiction, that two processes decided on different values $x$ and $y$, in views $v$ and $v'$, respectively.
Without loss of generality, assume that $v\geq v'$.
By Lemmata~\ref{lem:safe-if-all-nil}-\ref{lem:equivocation-safety-part-2} and the algorithm for choosing a value to propose, value $x$ can only be decided in view $v$ if it is safe in $v$. 
As no value other than $x$ can be decided in a view less than $v$, we have $v=v'$. 
But a value can only be decided in a view if $n-f$ processes sent {\MAck} messages in that view. 
By~\ref{prop:simple-quorum-intersection}, at least one correct process must have sent {\MAck} messages for both $x$ and $y$ in view $v$---a contradiction with the algorithm.
\end{proof}

\subsection{Generalized version}

% Similar to FaB Paxos, we can define a version of our protocol with two parameters: $f$ and $t$ ($1 \le t \le f$).
% The protocol will solve consensus as long as the number of failures does not exceed $f$, but it will remain fast (i.e., will terminate in two steps in the common case) only as long the actual number of failures does not exceed $t$.
% %
% The required number of processes will be $n \ge 3f + 2t - 1$.

\atrev{%
Following the example of previous work~\cite{kursawe2002optimistic,fab-paxos,revisiting-fast-bft-2}, we can add a PBFT-like \emph{slow path} in order to obtain a generalized version of our algorithm.
The protocol will tolerate $f$ Byzantine failures and will be able to decide a value in the common case after just two communication steps as long as the actual number of faults does not exceed threshold $t$ ($1 \le t \le f$).
The required number of processes will be $3f+2t-1$.
Note that, when $t=1$, we obtain a Byzantine consensus protocol with optimal resilience ($3f+2t-1=3f+1$ when $t=1$) that is able to decide a value with optimal latency in the common case in presence of a single Byzantine fault.
To the best of our knowledge, in all previous algorithms with optimal resilience ($n=3f+1$), the optimistic fast path could make progress only when all processes were correct.

We describe the generalized version of our protocol in detail in Appendix~\ref{app:generalized-version}.}

\section{Lower Bound} \label{sec:lower-bound}

% Let $\Domain$ be the domain of the consensus (i.e., the set of values that can be proposed).
% We define an \emph{execution} as a tuple: $(\ByzProc, \Vinput, \Schedule)$,
% where $\ByzProc$ is the set of Byzantine processes ($|\ByzProc| \le f$),
% $\Vinput: \AllProc \to \Domain$ is the function that maps process to their input values,
% and $\Schedule$ is the sequence of steps taken by all processes.
% %
% Note that although $\Vinput$ maps all processes to some input values, Byzantine process can pretend as if they have different input.
% We only consider executions in which every correct process decides on some value at some point.

% Let $\Domain$ be the domain of the consensus (i.e., the set of values that can be proposed).
% We define an \emph{initial configuration} as a tuple $(\ByzProc, \Vinput)$,
% where $\ByzProc$ is the set of Byzantine processes ($|\ByzProc| \le f$)
% and $\Vinput: \AllProc \to \Domain$ is the function that maps process to their input values.
% %
% Note that although $\Vinput$ maps all processes to some input values, Byzantine process can pretend as if they have different input.

% \subsection{Model}

In this section, we show that any $f$-resilient Byzantine consensus protocol that terminates within two message delays in the common case when the number of actual failures does not exceed $t$ (we call such a protocol \emph{$t$-two-step}) requires at least $3f+2t-1$ processes.

%In~\cite{fab-paxos}, a lower bound for fast Byzantine consensus is presented. It states that $n=3f+2t+1$ processes are required in order to implement a fast Byzantine consensus protocol tolerating $f$ Byzantine processes and able to commit after just two steps when the number of actual failures does not exceed $t$.
%
%\replaceY{Our protocol clearly violates this lower bound.
%After careful examination, we noticed certain mistakes in the proof of the lower bound in~\cite{fab-paxos}.}{Our protocol seems to violate this lower bound, but there are differences in our assumptions.}
%
%In the following section, we present a revised version of the lower bound, proving that $n=\max\{3f+2t-1, 3f+1\}$ is the minimum number of processes required for a fast Byzantine consensus protocol.

In Section~\ref{subsec:fab-lower-bound}, we also show that the higher lower bound of $n=3f+2t+1$ processes (claimed by Martin and Alvisi~\citep{fab-paxos}) holds for a special class of protocols assuming that the processes that propose values (so called \emph{proposers}) are disjoint from the processes responsible for replicating the proposed values (so called \emph{acceptors}).

\subsection{Preliminaries}

It is well-known that a partially-synchronous Byzantine consensus requires at least $3f+1$ processes~\cite{pease1980reaching}.
Hence, we assume that $|\AllProc| \ge 3f+1$.
\atrev{Additionally, since the case when $f=0$ is trivial, in the rest of this section, we assume that $f \ge 1$.}
%
% Moreover, since the resilience of the proposed protocol ($\max\{3f+2t-1, 3f+1\}$) coincides with the general lower bound on partially-synchronous Byzantine consensus ($3f+1$) when $t \le 1$, in this section, we only consider the case when $f \ge t \ge 2$.
%
% We rely on these assumptions in the proof of Lemma~\ref{lem:influential-process}.}

Let $\Domain$ be the domain of the consensus protocol (i.e., the set of possible input values).
We define an \emph{initial configuration} as a function 
$\IConf: \AllProc \to \Domain$ that maps processes to their input values.
Note that although $\IConf$ maps all processes to some input values, Byzantine processes can pretend as if they have different input\atadd{s}.

An \emph{execution} of the protocol is the tuple $(\IConf, \ByzProc, \Schedule)$,
where $\IConf$ is an initial configuration,
$\ByzProc$ is the set of Byzantine processes ($|\ByzProc| \le f$),
and $\Schedule$ is a totally ordered sequence of steps taken by every process consisting of ``send message'', ``receive message'', and ``timer elapsed'' events.
Each event is marked with an absolute time when it occurred, according to the global clock.
We allow multiple events to happen at the same time, but $\Schedule$, nevertheless, arranges them in a total order.
If $\exec = (I, \ByzProc, \Schedule)$, we say that execution $\exec$ \emph{starts from} initial configuration $\IConf$.

%Traditionally, in a partially-synchronous model, the existence of fictional global time not accessible to the processes is assumed and is used in proofs.
%Moreover, the processes are augmented with imprecise clocks.
%
% \pkrev{Recall that 
\atrev{%
In the proof of this lower bound, 
we assume that all processes have access to perfectly synchronized local clocks that show exact time elapsed since the beginning of the execution.
Note that this only strengthens our lower bound.
If there is no algorithm implementing fast Byzantine consensus with $3f+2t-2$ or fewer processes in the model with perfectly synchronized clocks, then clearly there is no such algorithm with loosely synchronized clocks.}
%%, which is a typical assumption in a partially synchronous model. 
% }

% \atremove{In all executions that we consider in the proof, the delivery of a message takes at least $\Delta$ time units.}
%
We refer to events that happen during the half-open time interval $[0, \Delta)$ as \emph{the first round},
to the events that happen during the half-open time interval $[\Delta, 2\Delta)$ as \emph{the second round}, and so on.
%
% \atremove{Hence, a message sent in round $i$ may only be delivered in round $i+1$ or later.}
In all executions that we consider, a message sent in round $i$ will be delivered in round $i+1$ or later.
%
% \atremove{When we talk about the state of a process ``after the $i$-th round'', we mean its state after all events occurred at time $T\leq i\Delta$ and before any event occurred at time $T>i\Delta$.}

%\yan{[Does this type of definition work? What if you have a malicious process that is ``supposed'' to send out something in round 1 and never sends it? Then this definition would imply everyone else never passes the first round. I think the definition cannot include ``supposed.'']}

%\todoA{I have rewritten it a bit. I agree that ``supposed'' was too vague. By ``supposed to receive'' I meant not by the algorithm, but by the schedule. Anyway, check out the new version.}

\begin{lemma} \label{lem:first-round-determinism}
Actions taken by correct processes during the first round depend exclusively on their inputs (i.e., on the initial configuration).
\end{lemma}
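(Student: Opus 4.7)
The plan is to observe that this lemma follows almost immediately from the model assumptions laid out just above it. Since every process runs a deterministic state machine, the behavior of a correct process during any time interval is a deterministic function of (i) its initial state, which is fixed by its input value $\xinput_p$ from $\IConf$, (ii) the sequence of messages it receives during that interval, and (iii) any clock-based triggers. I would then eliminate each source of variation other than the input.

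First I would invoke the round-structure convention introduced just before the lemma: a message sent in round $i$ is delivered no earlier than round $i+1$. In particular, during the first round $[0,\Delta)$ no correct process can have received any protocol message, because no message has been sent prior to time $0$. Hence the ``receive message'' component contributes nothing to the state of a correct process during the first round. Next I would appeal to the perfectly-synchronized-clock assumption stated explicitly for the lower-bound section: all correct processes read the same absolute time, so ``timer elapsed'' events during $[0,\Delta)$ occur at times that are themselves determined by the protocol (not by any execution-specific data), and therefore contribute nothing execution-dependent either.

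Putting these observations together, the only execution-dependent input to a correct process $p$ during the first round is $\IConf(p) = \xinput_p$. By determinism of $p$'s state machine, the entire sequence of first-round actions of $p$ (including the contents of every \MPropose, \MAck, \MVote, or other message it sends and the local-state updates it performs) is a fixed function of $\xinput_p$ alone. In particular, it does not depend on $\ByzProc$, on the choices of Byzantine processes, on the schedule $\Schedule$ beyond the fixed round boundaries, or on the inputs of other processes.

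I do not anticipate any real obstacle here; the statement is essentially a sanity check that will be used to build indistinguishability arguments later in Section~\ref{sec:lower-bound}. The only subtlety worth being careful about is to explicitly note that Byzantine processes may deviate in round one, but their round-one sends cannot reach any correct process until round two, so they cannot influence correct processes' round-one behavior. This confirms that the first-round transcript of every correct process is a function of the initial configuration $\IConf$ alone.
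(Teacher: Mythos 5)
Your proposal is correct and follows essentially the same route as the paper: no messages can be delivered during $[0,\Delta)$ since round-one messages arrive no earlier than time $\Delta$, and determinism then forces first-round actions to depend only on the inputs. The extra care you take about synchronized clocks and Byzantine senders is a harmless elaboration of the same two-sentence argument the paper gives.
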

\begin{proof}
Indeed, in the executions that we consider, during the first round, no messages can be delivered. 
Messages that are sent at time $0$ are delivered not earlier than at time $\Delta$, which belongs to the second round.
As we only consider deterministic algorithms, 
all actions taken by the processes in the first round are based on their input values.
\end{proof}

% It is sufficient to consider only \emph{full information protocols}, i.e., the kind of protocols in which whenever a process sends a message, it sends its full view, that is, the full information about its input and all received messages along with the order in which these messages were received and the timestamps of when these messages were received.

Thanks to the {\liveness} property of consensus, we can choose to only consider \atremove{finite} executions in which every correct process decides on some value at some point.
Moreover, by the {\consistency} property of consensus, all correct processes have to decide the same value.
Let us call this value the \emph{consensus value} of an execution and denote it with $c(\exec)$, where $\exec$ is an execution.

Given an execution $\exec$ and a process $p$, the
\emph{\decisionView} of $p$ in $\exec$
is the view of $p$ at the moment when it triggers the $\fDecide$ callback.
The view consists of the messages $p$ received (ordered and with the precise time of delivery)
together with the state of $p$ in the initial configuration of $\exec$.
Note that the messages received by $p$ \emph{after} it triggers the callback are not reflected in the {\decisionView}. % (hence the name).

Let $\exec_1$ and $\exec_2$ be two executions, and let $p$
be a process which is correct in $\exec_1$ and $\exec_2$. 
Execution $\exec_1$ is \emph{similar} to execution $\exec_2$ with respect to $p$, 
denoted as $\exec_1 \simFor{p} \exec_2$, 
if the {\decisionView} of $p$ in $\exec_1$ is the same as the {\decisionView} of $p$ in $\exec_2$.
If $P$ is a set of processes, we use $\exec_1 \simFor{P} \exec_2$ as a shorthand for $\forall p \in P: \exec_1 \simFor{p} \exec_2$.

\begin{lemma} \label{lem:similar-executions}
If there is a correct process $p \in \AllProc$ such that $\exec_1 \simFor{p} \exec_2$, then $c(\exec_1) = c(\exec_2)$.
\end{lemma}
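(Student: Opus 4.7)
The idea is straightforward: a correct process's decision is a deterministic function of its {\decisionView}, and by {\Consistency} every correct decision in an execution agrees with the consensus value of that execution. So if $p$ sees the same thing in both executions, it makes the same decision, and that decision pins down $c(\exec_1)$ and $c(\exec_2)$ to the same value.

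\textbf{Step 1.} I would first observe that the protocol is a deterministic state machine (as stated in the model in Section~\ref{sec:preliminaries}). Thus the sequence of steps taken by a correct process $p$ up to the point where it triggers $\fDecide$—and in particular, the argument passed to $\fDecide$—is a function only of $p$'s initial state and the ordered sequence of messages it has received up to that point. This is exactly the information captured in $p$'s {\decisionView}.

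\textbf{Step 2.} Since $p$ is correct in both $\exec_1$ and $\exec_2$ and $\exec_1 \simFor{p} \exec_2$, the {\decisionView} of $p$ is identical in the two executions. Applying the determinism observation from Step~1, $p$ triggers $\fDecide(x)$ in $\exec_1$ with the same value $x$ as it does in $\exec_2$. Call this common value $x$.

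\textbf{Step 3.} By the {\Liveness} property, every correct process eventually decides in each execution, and by {\Consistency} all correct processes decide the same value; this common value is, by definition, the consensus value of the execution. Since $p$ is correct in $\exec_1$ and decides $x$, we have $c(\exec_1) = x$; similarly $c(\exec_2) = x$. Hence $c(\exec_1) = c(\exec_2)$, as required.

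\textbf{Main obstacle.} There is essentially no technical obstacle here; the lemma is a bookkeeping statement that formalizes the standard indistinguishability argument. The only subtlety worth flagging explicitly in the write-up is that the {\decisionView} is defined to contain precisely the information on which $p$'s decision depends (initial state plus ordered, timestamped received messages up to the decision point), so one must invoke determinism against \emph{that} specific notion of view rather than the full global history—messages arriving at $p$ after it decides, or events at other processes, are irrelevant and correctly excluded from the view.
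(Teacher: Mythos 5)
Your proposal is correct and follows essentially the same argument as the paper: by determinism, $p$'s decision is a function of its {\decisionView}, so an identical view forces the same decided value, which by {\Consistency} and {\Liveness} equals the consensus value of each execution. No gaps.
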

\begin{proof}
Since we only consider executions where all correct processes decide some value, in executions $\exec_1$ and $\exec_2$, process $p$ had to decide values $c(\exec_1)$ and $c(\exec_2)$ respectively.
However, since, at the moment of the decision, process $p$ is in the same state in both executions and we only consider deterministic processes, $p$ has to make identical decisions in the two executions. Hence, $c(\exec_1) = c(\exec_2)$.
\end{proof}

We say that $\exec = (\IConf, \ByzProc, \Schedule)$ is a \emph{$\T$-faulty two-step execution},
where $\T \subset \AllProc$ and $|\T| = t$, iff:
\begin{enumerate}
    \item All processes in $\AllProc \setminus \T$ are correct and all processes in $\T$ are Byzantine
    (i.e., $\ByzProc = \T$);
    
    \item 
    % Delivery of each message between each pair of processes takes precisely $\Delta$ time units;
    All messages sent during round $i$ (i.e., time interval $[(i-1)\Delta, i\Delta)$) are delivered precisely at the beginning of the next round (i.e., at time $i\Delta$).
    %\yan{[I think the original statement is too strong as we allow longer delays (and in fact need to use them in the lower bound rpoof)]}   
    
    \item Local computation is instantaneous.
    In particular, if a correct process receives a message at time $T \in [(i-1)\Delta, i\Delta)$ and sends a reply without explicit delay, the reply will be sent also at time $T$ and will arrive at time $i\Delta$;
    
    % \item Processes in $\T$ correctly follow the protocol during the first round and do not take any actions in later rounds.
    % In particular, they do not send any messages at time $\Delta$ or later;
    % \yznote{why isn't this ``they do not send any messages at time $\Delta$ or later?}
    % \atnote{it is probably a typo.}
    \item The Byzantine processes in $\T$ correctly follow the protocol with respect to the initial configuration $\IConf$ during the first round.
    After that, they stop taking any steps.
    In particular, they do not send any messages at time $\Delta$ or later;
    
    \item Every correct process makes a decision not later than at time $2\Delta$.
\end{enumerate}

\atadd{Intuitively, a $\T$-faulty two-step execution is an execution with relatively favorable conditions (the system is synchronous from the beginning and the Byzantine processes fail by simply crashing at time $\Delta$) in which all correct processes decide after just two message delays.}
% We use this notion to define the class of $t$-two-step algorithms, on which we prove our lower bound.

A protocol $\Protocol$ is called \emph{$t$-two-step} if it satisfies the following conditions:
\begin{enumerate}
    \item $\Protocol$ is a consensus protocol with {\weakValidity}, as defined in Section~\ref{sec:preliminaries};
    
    \item
    % \atrev{There is a set $\Suspects \subset \AllProc$ of at least $2t+2$ processes%
    % \footnote{\atadd{We will see where the number $2t+2$ comes from in the proof of Lemma~\ref{lem:influential-process}.}}
    % such that for all initial configurations $\IConf$ and
    % all $\T \subset \Suspects$ of size $t$,
    % there is a $\T$-faulty two-step execution starting from $\IConf$.}
    %
    For all $\T \subset \AllProc$ of size $t$,
    there is a $\T$-faulty two-step execution starting from $\IConf$.%
    \footnote{In Section~\ref{subsec:weaken}, we discuss ways to relax this assumption.}
\end{enumerate}
%
% In other words, if all Byzantine processes belong to a known set of \emph{``suspects''} $\Suspects$ and fail simply by crashing at time $\Delta$, local computation is immediate, and the network is synchronous, \atreplace{the algorithm ensures that all processes decide some value after just $2$ steps}{it must be possible for all processes to decide after just $2$ steps}.
% %
% Otherwise, when the environment is not so gracious (e.g., the network is not synchronous from the beginning or some processes in $\AllProc \setminus \Suspects$ are Byzantine), 
% \atreplace{the protocol is allowed to terminate after more than $2$ steps}{it may take more than $2$ steps to reach consensus}.
% %
% Intuitively, we allow the optimistic fast path of the protocol to rely on the correctness of up to \atrev{$n-(2t+2)$} ``leaders''.%
% \footnote{\atrev{Recall that we only consider the case when $n \ge 3f+1$ and $f \ge t \ge 2$.
% In this case, $n - (2t+2) \ge (3f+1) - (2t+2) \ge (f + 2t + 1) - (2t+2) \ge f-1 \ge 1$.
% Hence, there is always at least one ``non-suspect'' process.}}
% %
% We are not aware of any protocol published in the literature that would rely on the correctness of more than one ``leader'' process, but we want to make our lower bound as general as possible.
%
In other words, if there are at most $t$ Byzantine processes that fail simply by crashing at time $\Delta$, local computation is immediate, and the network is synchronous, it must be possible for all processes to decide after just $2$ steps.
Otherwise, when the environment is not so gracious (e.g., the network is not synchronous from the beginning or there are more than $t$ Byzantine processes), the protocol is allowed to terminate after more than $2$ steps.

The protocol presented in Section~\ref{sec:algorithm} is $t$-two-step.
Indeed, suppose that we have at least $3f + 2t - 1$ processes and $f \ge 1$.
Recall that $\fleader(1)$ is the leader for view $1$. 
Let $p = \fleader(1)$ \atadd{and let $\T$ be an arbitrary set of $t$ processes}.
% \atadd{The following is a $\T$-faulty two-step execution:}
% We can pick $\Suspects = \AllProc \setminus \{p\}$.
% \atnote{We can choose $\Suspects = \AllProc$. Does the definition of $\Suspects$ actually make sense?}
% \atnote{Note that $|\Suspects| = |\AllProc| = 3f+2t-1 \ge 5t-1 \ge 4t \ge 2t+2$}
Then, for any initial configuration and any set $\T$ of $t$ processes, the following $\T$-faulty two-step execution exists: 
\begin{enumerate}
    \item $p$ proposes its input value $x = \xinput_p$ at time $0$ with the message $\MPropose(x,1,\bot,\tauHat)$; %, where $\tauHat = \fsign_p((\MPropose, \xHat, v))$;

    \item All other processes, including those in $\T \setminus \{p\}$, do nothing during the first round;

    \item At time $\Delta$, every process receives the propose message, and \atreplace{$3f+t-1$}{$n-t$} correct processes respond with an acknowledgment $\MAck(x, 1)$;

    \item At time $2\Delta$, every correct processes receives \atreplace{all the}{the $n-t$} $\MAck$ messages and decides.
\end{enumerate}

The following lemma explains how the {\weakValidity} property of consensus dictates the output values of $\T$-faulty two-step executions.
% The following lemma makes a simple generalization of the {\weakValidity} property to address $\T$-faulty two-step executions.
%
\begin{lemma} \label{lem:f-faulty-validity}
    For any consensus protocol with {\weakValidity},
    if all processes have the same input value $x$ $(\forall p: \IConf(p) = x)$,
    % for any set $\T \subset \AllProc$ of size $|\T| = t$,
    for any $\T$-faulty two-step execution $\exec$ starting from $\IConf$,
    the consensus value $c(\exec)$ equals $x$. 
    % \yzadd{[maybe replace with ``equals $x$?'' the "=" sign sounds like a truth statement instead of a verb here.]}
\end{lemma}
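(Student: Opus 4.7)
The plan is to argue by indistinguishability. Given a $\T$-faulty two-step execution $\exec = (\IConf, \T, \Schedule)$ with $\IConf(p)=x$ for every $p$, I would build a failure-free companion execution $\exec' = (\IConf, \emptyset, \Schedule')$ that is indistinguishable from $\exec$ to some correct process $p \in \AllProc \setminus \T$. Then {\weakValidity} applied to $\exec'$ combined with Lemma~\ref{lem:similar-executions} forces $c(\exec) = c(\exec') = x$.

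To construct $\Schedule'$, I would keep the first round identical to $\Schedule$: by Lemma~\ref{lem:first-round-determinism}, first-round actions depend only on $\IConf$, and since the processes in $\T$ follow the protocol correctly in round~1 of $\exec$, every first-round message emitted and delivered is the same in the two executions. At time $\Delta$, I let every process in $\AllProc \setminus \T$ deliver and respond exactly as in $\exec$, producing the same round-2 messages that reach $p$ at time $2\Delta$. The processes in $\T$ do not crash in $\exec'$, so they will take further steps at time $\Delta$ and beyond, but I use the asynchronous prefix allowed by partial synchrony to postpone every message they send, and every message that those messages may subsequently trigger at correct processes, until a time $T$ strictly greater than the decision time of $p$. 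After $T$, I let the network behave synchronously so that {\liveness} eventually drives every correct process in $\exec'$ to a decision.

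Pick any $p \in \AllProc \setminus \T$ (such $p$ exists since $|\T|=t \le f < n$). By construction, every message that $p$ receives before invoking $\fDecide$ in $\exec$ is delivered to $p$ at the same instant in $\exec'$, and no extra message reaches $p$ before its decision. Hence the \decisionView\ of $p$ is the same in $\exec$ and $\exec'$, so $\exec \simFor{p} \exec'$. Lemma~\ref{lem:similar-executions} gives $c(\exec) = c(\exec')$. In $\exec'$ every process is correct and has input $x$, so {\weakValidity} yields $c(\exec') = x$, and therefore $c(\exec) = x$.

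The main obstacle is verifying that $\exec'$ is a legitimate execution of the protocol and not a modeling artifact: one must justify that the messages from $\T$ can be withheld from $\AllProc \setminus \T$ for the entire time window $[\Delta, T]$ without violating partial synchrony, and that liveness for the correct processes can still be recovered after $T$. This is handled by placing GST after $T$, which is consistent with the model since GST is unknown to the processes.
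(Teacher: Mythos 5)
Your proposal is correct and follows essentially the same route as the paper's proof: replace the Byzantine processes in $\T$ by correct-but-slow ones whose post-round-one messages are delayed past the decision time $T$, invoke indistinguishability via Lemma~\ref{lem:similar-executions}, and conclude with {\weakValidity}. The extra care you take about placing GST after $T$ and recovering liveness is a reasonable elaboration of what the paper leaves implicit.
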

\begin{proof}
    Consider a $\T$-two-step execution $\exec$, and let $T$ be the moment of time by which every correct process has invoked the $\fDecide$ callback.
    Let $\exec'$ be an execution identical to $\exec$, except
    that in $\exec'$, processes in $\T$ are not Byzantine, but just slow.
    The messages they send at time $\Delta$ or later reach the other processes only after time $T$.
    Notice that the processes in $\AllProc \setminus \T$ have no way to distinguish $\exec'$ and $\exec$ until they receive the delayed messages,
    which happens already after they decide.
    Hence, $\exec' \simFor{\AllProc \setminus \T} \exec$ and, by Lemma~\ref{lem:similar-executions}, $c(\exec') = c(\exec)$.
    By the {\weakValidity} property of consensus, if all processes have $x$ as their input value in $\exec'$, then $c(\exec') = x$.
\end{proof}

\subsection{Optimality of our algorithm}

Process $p \in \AllProc$ is said to be \emph{\influential} if there are two initial configurations ($\IConf$ and $\IConf'$)
such that $\forall q \neq p: \IConf(q) = \IConf'(q)$
% and two non-intersecting sets of suspects not including $p$ of size $t$ ($\T, \T'\subset \Suspects \setminus \{p\}$, $|\T| = |\T'| = t$, and $\T \cap \T' = \emptyset$)
and two non-intersecting sets of processes not including $p$ of size $t$ ($\T, \T'\subset \AllProc \setminus \{p\}$, $|\T| = |\T'| = t$, and $\T \cap \T' = \emptyset$)
such that there is a $\T$-faulty execution $\exec$ 
and a $\T'$-faulty execution $\exec'$ with different consensus values ($c(\exec) \neq c(\exec')$).
%
% \atrev{%
% Process $p \in \AllProc$ is said to be \emph{\influential} if there are two executions
% (a $\T$-fault execution $\exec = (\IConf, \ByzProc, \Schedule)$ and $\T'$-faulty execution $\exec' = (\IConf', \ByzProc', \Schedule')$)
% such that:
% \begin{itemize}
%     \item $\T \cap \T' = \emptyset$;
    
%     \item $p \notin \T \cup \T'$;
    
%     \item the initial configurations $\IConf$ and $\IConf'$ differ only in the input of process $p$ (i.e., $\forall q \neq p: \IConf(q) = \IConf'(q)$);
    
%     \item 
% \end{itemize},
% and two non-intersecting sets of suspects not including $p$ of size $t$ ($\T, \T'\subset \Suspects \setminus \{p\}$, $|\T| = |\T'| = t$, and $\T \cap \T' = \emptyset$)
% and two non-intersecting sets of processes not including $p$ of size $t$ ($\T, \T'\subset \AllProc \setminus \{p\}$, $|\T| = |\T'| = t$, and $\T \cap \T' = \emptyset$)
% such that there is a $\T$-faulty execution $\exec$ 
% and a $\T'$-faulty execution $\exec'$ with different consensus values ($c(\exec) \neq c(\exec')$).}

% \needsrevP{Intuitively, a process is {\influential} if its input value exclusively determines the outcome of a specific fast-path execution of our consensus protocol.}
Intuitively, a process is {\influential} if its input value under certain circumstances can affect the outcome of the fast path of the protocol.
% \yan{[Esp. if we are to delete the Appendix, this might be a good place to add the intuition that the influential process mostly just picks out the leader, without mentioning ``leader'' directly.]}
%
In Theorem~\ref{the:lower-bound}, we prove that, if the number of processes is smaller than $3f+2t-1$, an {\influential} process can use its power to force disagreement.
%
%\replaceP{As we show in the following lemma, every $t$-two-step consensus protocol has an influential process.}{[I might just delete this, since the words basically include the actual wording of the Lemma, which feels weird to me.]}

\begin{lemma} \label{lem:influential-process}
For any $t$-two-step consensus protocol, there is at least one {\influential} process.
\end{lemma}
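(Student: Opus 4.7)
My plan is to prove the lemma by contradiction: assume that no process in $\AllProc$ is {\influential}, and derive a contradiction with {\weakValidity} via Lemma~\ref{lem:f-faulty-validity}.

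The first step is to show that, under this assumption, all $\T$-faulty two-step executions starting from the same initial configuration $\IConf$ decide on the same value, which I will denote $C(\IConf)$. Given any two such executions $\exec_1$ (with faulty set $\T_1$) and $\exec_2$ (with faulty set $\T_2$) from $\IConf$, I would introduce an auxiliary $\T_3$-faulty two-step execution $\exec_3$ from $\IConf$ with $\T_3 \subseteq \AllProc \setminus (\T_1 \cup \T_2)$ and $|\T_3| = t$---such a $\T_3$ exists because $n \ge 3f+1 \ge 3t+1$ (using $t \le f$), and $\exec_3$ exists by the $t$-two-step hypothesis. Then, picking any $p \notin \T_1 \cup \T_3$ (possible since $n \ge 2t+1$) and applying the non-influence of $p$ to the degenerate pair $(\IConf, \IConf)$ yields $c(\exec_1) = c(\exec_3)$; symmetrically, $c(\exec_3) = c(\exec_2)$.

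Next, I would argue that $C(\IConf) = C(\IConf')$ whenever $\IConf$ and $\IConf'$ differ at a single process $p$. Since $n - 1 \ge 2t$, we can pick disjoint $\T, \T' \subseteq \AllProc \setminus \{p\}$ of size $t$; then any $\T$-faulty two-step $\exec$ from $\IConf$ and any $\T'$-faulty two-step $\exec'$ from $\IConf'$ satisfy $c(\exec) = c(\exec')$ by non-influence of $p$, giving $C(\IConf) = C(\IConf')$. Chaining single-process flips extends this to all pairs of initial configurations, so $C$ is constant on the whole configuration space. Finally, Lemma~\ref{lem:f-faulty-validity} applied to the all-$v_0$ and all-$v_1$ initial configurations forces $C$ to take two distinct values (assuming the input domain has at least two elements, which is harmless), contradicting constancy.

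The main obstacle I foresee is justifying the use of the non-influence assumption in the degenerate case $\IConf = \IConf'$ that is essential for the first step: the definition of {\influential} quantifies over pairs of initial configurations agreeing on all $q \ne p$, and one must check that this formally admits the case in which they also happen to agree at $p$---this reading appears consistent with the universal quantification in the negation of the definition. The rest is combinatorial bookkeeping, verifying that the disjoint failure sets and the excluded process $p$ required by each invocation of non-influence can always be exhibited within $n \ge 3f+1$ processes, given $t \le f$ and $f \ge 1$.
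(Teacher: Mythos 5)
Your proof is correct, but it is organized differently from the paper's. The paper argues directly: it walks the chain of configurations $\IConf_0,\dots,\IConf_n$ (flipping one input from $0$ to $1$ at a time), defines the predicate $\pred(i)$ asserting the existence of \emph{some} fast execution from $\IConf_i$ with consensus value $1$, and takes the minimal $j$ with $\pred(j)$ true; the existential form of $\pred$ lets it sidestep the question of whether different faulty sets could yield different consensus values from the same configuration. You instead take the contrapositive and must therefore first establish that, absent any influential process, the consensus value $C(\IConf)$ of a $\T$-faulty two-step execution is independent of $\T$ and of the particular execution---your Step 1, which correctly requires the degenerate instantiation $\IConf'=\IConf$ of the non-influence hypothesis (legitimate, since the definition only demands agreement on all $q\neq p$ and does not require $\IConf\neq\IConf'$) plus a third disjoint faulty set $\T_3$ as a pivot. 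Your counting ($n\ge 3f+1\ge 3t+1$ for the pivot set, $n-1\ge 2t$ for the disjoint pair) checks out. What your route buys is a cleaner conceptual statement: the fast-path consensus value is a locally constant, hence constant, function of the initial configuration, which contradicts {\weakValidity} at the two endpoint configurations. What it costs is the extra well-definedness step and the reliance on the degenerate case of the definition, neither of which the paper's direct minimal-index argument needs.
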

\begin{proof}
$\forall i \in \{0, \ldots, n\}$: let $\IConf_i$ be the initial configuration in which the first $i$ processes have the input value $1$ and the remaining processes have the input value $0$.
In particular, in $\IConf_0$, all processes have the input value $0$, and, in $\IConf_n$, all processes have the input value $1$.
By the definition of a $t$-two-step consensus protocol,
for all $i \in \{1, \dots, n\}$ and $\T \subset \AllProc$ ($|\T| = t$),
there must be a $\T$-faulty two-step execution starting from $\IConf_i$.
Moreover, by Lemma~\ref{lem:f-faulty-validity},
all $\T$-faulty two-step executions starting from $\IConf_0$ (resp., $\IConf_n$) have the consensus value $0$ (resp., $1$).
\atreplace{Let $\AllProc$ be $\{p_1, \dots, p_n\}$.}{Recall that $\AllProc = \{p_1, \dots, p_n\}$.}
For all $i \in \{1, \dots, n\}$, let $\pred(i)$ be the predicate ``there is a set $\T_1 \subset (\AllProc \setminus \{p_i\})$ such that there is a $\T_1$-faulty two-step execution with consensus value $1$ starting from $\IConf_i$''.
Let $j$ be the minimum number such that $\pred(j) = \True$ (note that such a number exists because $\pred(n) = \True$).
Let $\T_1$ be the set of processes defined in the predicate. By definition, $p_j \notin \T_1$.

Let us consider two cases:
\begin{itemize}
    \item If $j > 1$, let $\T_0$ be an arbitrary subset of $\AllProc \setminus (\T_1 \cup \{p_j, p_{j-1}\})$ of size $t$.
    Note that such a subset exists because $|\AllProc \setminus (\T_1 \cup \{p_{j-1}, p_j\})| = |\AllProc| - (t + 2) \ge (3f+1) - (t + 2) \ge t$.
    Since $j$ is the minimum number such that $\pred(j) = \True$, $\pred(j-1) = \False$.
    Hence, all $\T_0$-faulty two-step executions starting from initial configuration $\IConf_{j-1}$ have consensus value $0$.
    By the definition of a $t$-two-step consensus protocol, there is at least one such execution.
    \atrev{Let $\exec_0$ be such an execution.}
    
    \item If $j = 1$, let $\T_0$ be an arbitrary subset of $\AllProc \setminus (\T_1 \cup \{p_j\})$ of size $t$.
    By Lemma~\ref{lem:f-faulty-validity}, all $\T_0$-faulty two-step executions starting from $\IConf_0$ have consensus value $0$, and, by the definition of a $t$-two-step consensus algorithm, there is at least one such execution.
    \atrev{Let $\exec_0$ be such an execution.}
\end{itemize}

We argue that $p_j$ is an {\influential} process.
Indeed, $\IConf_{j-1}$ and $\IConf_j$ differ only in the input of process $p_j$, $\rho_0$ and $\rho_1$ are $\T_0$- and $\T_1$-faulty executions starting from $\IConf_{j-1}$ and $\IConf_j$ respectively, $\T_0 \cap \T_1 = \emptyset$, $p_j \notin (\T_0 \cup \T_1)$, and $c(\rho_0) \neq c(\rho_1)$.
\end{proof}

% \atremove{%
% Let us prove that $3f+2t-1$ is optimal in the special case when $t \le 1$.
% %
% \begin{theorem}
% There is no $t$-two-step consensus protocol for $3f+2t-2$ processes, where $t \le 1$.
% \end{theorem}
% \begin{proof}
% Note that, if $t \le 1$, $3f+2t-2 \le 3f$.
% Hence, this theorem is a special case of the more general lower bound~\cite{pease1980reaching} that states that any Byzantine consensus protocol in partially synchronous model requires at least $3f+1$ processes.
% \end{proof}
%}

\begin{figure}
    \centering
    \begin{tikzpicture}[xscale=0.75,yscale=0.7]
        \draw[step=1cm,color=black,xshift=0.5cm, yshift=-0.5cm] (-0.5,-1) grid (7,5.5);
        % \small
        %\newcounter{mycount}
        %\setcounter{mycount}{`A}
        %\foreach \y in {0, 1, ..., 5}
        %  \foreach \x in {0, 1, ..., 5}
        %    \node at (\x,\y){$3$}; %{\char\value{mycount}\addtocounter{mycount}{1}};
        \node at (1, 5){$\{p\}$};
        \node at (2, 5){$P_1$};
        \node at (3, 5){$P_2$};
        \node at (4, 5){$P_3$};
        \node at (5, 5){$P_4$};
        \node at (6, 5){$P_5$};
        \node at (7, 5){$c(-)$};
        
        \node at (0, 0){$\exec_1$};
        \node at (0, 1){$\exec_2$};
        \node at (0, 2){$\exec_3$};
        \node at (0, 3){$\exec_4$};
        \node at (0, 4){$\exec_5$};
        
        \node at (1, 1){$\ninj$};
        \node at (1, 2){$\ninj$};
        \node at (1, 3){$\ninj$};
        
        \node at (2, 0){$\ninj$};
        \node at (2, 1){$s_1$};
        \node at (2, 2){$s_1$};
        \node at (2, 3){$s_1$};
        \node at (2, 4){$s_1$};
        
        \node at (3, 0){$t_2$};
        \node at (3, 1){$\ninj$};
        \node at (3, 2){$s_2$};
        \node at (3, 3){$s_2$};
        \node at (3, 4){$s_2$};
        
        \node at (4, 0){$t_3$};
        \node at (4, 1){$t_3$};
        \node at (4, 2){$\ninj$};
        \node at (4, 3){$s_3$};
        \node at (4, 4){$s_3$};
        
        \node at (5, 0){$t_4$};
        \node at (5, 1){$t_4$};
        \node at (5, 2){$t_4$};
        \node at (5, 3){$\ninj$};
        \node at (5, 4){$s_4$};
        
        \node at (6, 0){$t_5$};
        \node at (6, 1){$t_5$};
        \node at (6, 2){$t_5$};
        \node at (6, 3){$t_5$};
        \node at (6, 4){$\ninj$};
        
        \node at (7, 0){$1$};
        \node at (7, 1){?};
        \node at (7, 2){?};
        \node at (7, 3){?};
        \node at (7, 4){$0$};

        \draw [black] (7.6,0.1) .. controls (7.9,0.3) and (7.9,0.7) .. (7.6,0.9);
        \node [anchor=west] at (7.8, 0.5){$\exec_1 \simFor{P_3} \exec_2$};
        \draw [black] (7.6,1.1) .. controls (7.9,1.3) and (7.9,1.7) .. (7.6,1.9);
        \node [anchor=west] at (7.8, 1.5){$\exec_2 \simFor{P_1 \cup P_4 \cup P_5} \exec_3$};
        \draw [black] (7.6,2.1) .. controls (7.9,2.3) and (7.9,2.7) .. (7.6,2.9);
        \node [anchor=west] at (7.8, 2.5){$\exec_3 \simFor{P_1 \cup P_2 \cup P_5} \exec_4$};
        \draw [black] (7.6,3.1) .. controls (7.9,3.3) and (7.9,3.7) .. (7.6,3.9);
        \node [anchor=west] at (7.8, 3.5){$\exec_4 \simFor{P_3} \exec_5$};
        
        % \draw [black] (7.6,0.1) .. controls (7.9,0.3) and (7.9,0.7) .. (7.6,0.9);
        % \node [anchor=west] at (7.8, 0.5){similar for $P_3$};
        % \draw [black] (7.6,1.1) .. controls (7.9,1.3) and (7.9,1.7) .. (7.6,1.9);
        % \node [anchor=west,align=left] at (7.8, 1.5){similar for $P_1$,\\$P_4$, and $P_5$};
        % \draw [black] (7.6,2.1) .. controls (7.9,2.3) and (7.9,2.7) .. (7.6,2.9);
        % \node [anchor=west, align=left] at (7.8, 2.5){similar for $P_1$,\\$P_2$, and $P_5$};
        % \draw [black] (7.6,3.1) .. controls (7.9,3.3) and (7.9,3.7) .. (7.6,3.9);
        % \node [anchor=west] at (7.8, 3.5){similar for $P_3$};
        
        \node [anchor=east] at (0.3, -1){size};
        \node at (1, -1){$1$};
        \node at (2, -1){$t$};
        \node at (3, -1){$f{-}1$};
        \node at (4, -1){$f{-}1$};
        \node at (5, -1){$f{-}1$};
        \node at (6, -1){$t$};
    \end{tikzpicture}
    
    \caption{The proof setup of the lower bound when $f \ge t \ge 2$. The rows are executions and the columns are groups (subsets) of the processes. Byzantine groups are denoted with $\ninj$. The states of the processes after the first round are denoted with $s_i$ and $t_i$.}
    \label{fig:lower-bound}
\end{figure}

\begin{figure}[htbp]
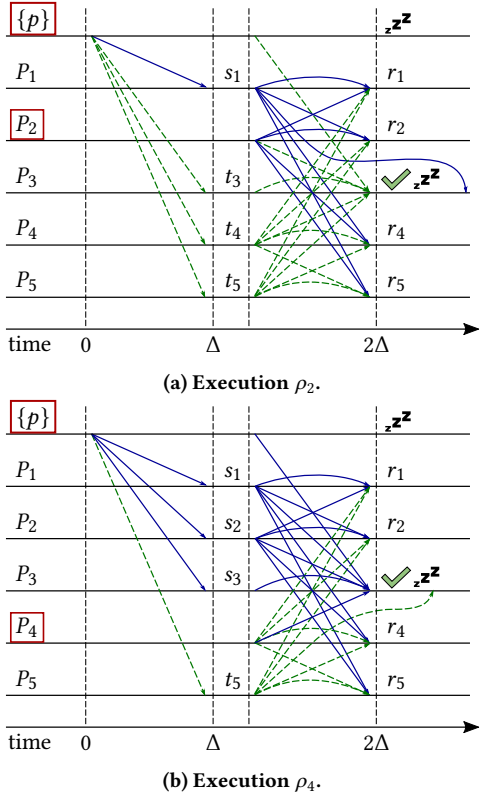

    \centering

    \begin{subfigure}[b]{\linewidth}
        \centering
        \includesvg{svg/lower-bound-exec-2.svg}
        \caption{Execution $\exec_2$.}
        \label{fig:lower-bound-exec-2}
    \end{subfigure}

    \begin{subfigure}[b]{\linewidth}
        \centering
        \includesvg{svg/lower-bound-exec-4.svg}
        \caption{Execution $\exec_4$.}
        \label{fig:lower-bound-exec-4}
    \end{subfigure}

    % \begin{subfigure}[b]{\linewidth}
    %     \centering
    %     \includesvg{svg/lower-bound-exec-3.svg}
    %     \caption{Execution $\exec_3$.}
    %     \label{fig:lower-bound-exec-3}
    % \end{subfigure}

    \caption{
        First two rounds of executions $\exec_2$ and $\exec_4$.
        Solid blue lines and dashed green arrows represent messages identical to messages sent in $\exec_5$ and $\exec_1$ respectively.
        Green tick symbol means that all processes in the group decide a value.
        Messages from all processes other than $p$ in the first round are identical in all $5$ executions and omitted on the picture for clarity.
        Messages sent in the second round to process $p$ are also omitted.
        $s_i$ and $t_i$ represent states of correct processes after the first round in $\exec_5$ and $\exec_1$ respectively.
        States of correct processes after the second round in $\exec_3$ are denoted by $r_i$.
        In $\exec_2$ (resp, $\exec_4$), processes in group $P_2$ (resp., $P_4$) are Byzantine, but they pretend to be correct and in state $r_2$ (resp., $r_4$).}
     
    \label{fig:lower-bound-executions}
\end{figure}

\begin{figure}[htbp]
    \centering

    \includesvg{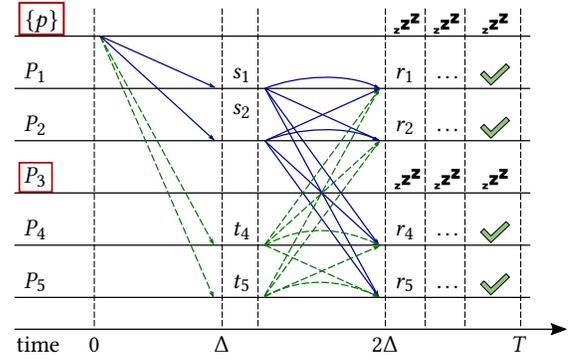}
    
    \caption{%
        Execution $\exec_3$.
        Messages sent in the second round to group $P_3$ are omitted as these processes are Byzantine and do not take any further steps after the second round.}
     
    \label{fig:lower-bound-exec-3}
\end{figure}

\begin{theorem} \label{the:lower-bound}
There is no $f$-resilient $t$-two-step consensus protocol for $3f+2t-2$ processes.
% ,
% where $t \ge 2$.
% \footnote{\pkrev{Recall that when $f=1$ and $t=0$, $3f+1=4$ are required~\cite{BT85}.}}
\end{theorem}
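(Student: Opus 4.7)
The plan is to derive a contradiction by exhibiting a chain of five executions $\exec_1, \exec_2, \exec_3, \exec_4, \exec_5$ whose endpoints $\exec_1, \exec_5$ are two-step executions with distinct consensus values while each consecutive pair is indistinguishable to at least one correct process; Lemma~\ref{lem:similar-executions} will then force equality of all consensus values along the chain, contradicting the endpoints. I will handle the trivial case $f = 1$ separately: since $t \le f = 1$ would give $n = 3f + 2t - 2 \le 3$, any such protocol would already violate the classical Byzantine lower bound $n \ge 3f + 1 = 4$. Hence I assume $f \ge 2$ from here.

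To build the chain, I will apply Lemma~\ref{lem:influential-process} to obtain an influential process $p$ witnessed by initial configurations $\IConf_0, \IConf_1$ (differing only at $p$), disjoint sets $\T_0, \T_1 \subset \AllProc \setminus \{p\}$ of size $t$, and two-step executions $\exec_1$ (from $\IConf_1$, Byzantine set $\T_1$) and $\exec_5$ (from $\IConf_0$, Byzantine set $\T_0$) with $c(\exec_1) \ne c(\exec_5)$. I then partition $\AllProc = \{p\} \cup P_1 \cup P_2 \cup P_3 \cup P_4 \cup P_5$, where $P_1 := \T_1$, $P_5 := \T_0$, and $P_2, P_3, P_4$ are any equal-size subsets of the remaining $3(f-1)$ non-$p$ processes, so that $|P_i| = f - 1 \ge 1$ for $i \in \{2,3,4\}$.

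Each intermediate execution $\exec_k$ ($k \in \{2, 3, 4\}$) will have Byzantine set $\{p\} \cup P_k$ of size $f$ and initial configuration $\IConf_0$. I will make $p$ ``split-brained'': to each process in $P_i$ with $i < k$, $p$ sends the round-$1$ message it would send as a correct participant of $\exec_5$; to each one in $P_i$ with $i > k$, $p$ sends the message it would send in $\exec_1$. Consequently, every correct process in $P_i$ with $i < k$ enters the state $s_i$ it would have in $\exec_5$, and every one with $i > k$ enters the state $t_i$ of $\exec_1$. The Byzantine group $\{p\} \cup P_k$ then sends round-$2$ messages impersonating whichever behavior each correct recipient expects, yielding the chain $\exec_1 \simFor{P_3} \exec_2 \simFor{P_1 \cup P_4 \cup P_5} \exec_3 \simFor{P_1 \cup P_2 \cup P_5} \exec_4 \simFor{P_3} \exec_5$ shown in Figure~\ref{fig:lower-bound}. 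For each internal link, the named correct processes share post-round-$1$ states across the two executions, and their round-$2$ views coincide because matching correct senders send identical messages and Byzantine senders directly simulate the counterpart behavior. For each boundary link (involving $\exec_1$ or $\exec_5$), the observer $P_3$ is in state $t_3$ (resp.\ $s_3$) in both executions, and the only extra information it might receive in the intermediate execution is a round-$2$ message from the now-correct $P_1$ (resp.\ $P_5$); I will delay such cross-side messages past $P_3$'s decision time, which is permitted because $\exec_2$ and $\exec_4$ need not be two-step. Applying Lemma~\ref{lem:similar-executions} along the chain will then yield $c(\exec_1) = c(\exec_5)$, a contradiction.

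The hard part will be verifying the four indistinguishability relations in detail. In particular, the Byzantine pivot group $P_k$ must send round-$2$ messages consistent with state $s_k$ to one side and state $t_k$ to the other, and the cross-side round-$2$ messages to $P_3$ at the boundary of the chain must arrive late enough that $P_3$ decides without observing them, yet not so late as to violate liveness of $\Protocol$. Both issues are handled by the adversary's control over message delays in non-two-step executions, after which each indistinguishability reduces to a direct comparison of decision views.
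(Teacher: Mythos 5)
Your proposal follows essentially the same route as the paper's own proof: dispose of the degenerate case via the classical $3f+1$ bound, invoke Lemma~\ref{lem:influential-process} to obtain the influential process $p$ and the two endpoint two-step executions, partition $\AllProc\setminus\{p\}$ into $\T''=P_1,P_2,P_3,P_4,P_5=\T'$ with $|P_2|=|P_3|=|P_4|=f-1$, and close the chain of Figure~\ref{fig:lower-bound} with a split-brained $p$ and dual-personality pivot groups, contradicting Lemma~\ref{lem:similar-executions}. The only detail you leave implicit that the paper makes explicit is the precise role of $\exec_3$: besides delaying the cross-side messages \emph{to} $P_3$, you must also delay $P_3$'s \emph{outgoing} messages in $\exec_2$ and $\exec_4$ until a time $T$ chosen \emph{after} all correct processes decide in $\exec_3$ (where $\{p\}\cup P_3$ is the Byzantine set and $P_3$ is silent from time $\Delta$ on), since it is {\liveness} applied to $\exec_3$ that guarantees $P_1$, $P_4$, $P_5$ ever decide without hearing from $P_3$.
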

\begin{proof}
\atrev{Note that, if $t \le 1$, $3f+2t-2 \le 3f$.
Hence, the case when $t\le1$ follows directly from the classic bound of $3f+1$ processes for partially synchronous $f$-resilient Byzantine consensus~\cite{BT85}.}

Consider the case $t \ge 2$ and suppose, by contradiction, that there is a $t$-two-step consensus protocol
% Suppose, by contradiction, that there is a $t$-two-step consensus protocol
%that can be executed on a set $\AllProc$ of 
for $3f+2t-2$ processes ($f \ge t \ge 2$).
By Lemma~\ref{lem:influential-process}, there is an {\influential} process $p$, i.e., there exist two initial configurations ($\IConf'$ and $\IConf''$) that differ only in the input of process $p$, two sets of processes ($\T', \T'' \subset \AllProc \setminus \{p\}$, $|\T'| = |\T''| = t$, and $\T' \cap \T'' = \emptyset$), and two executions: a $\T'$-faulty execution $\exec'$ starting from $\IConf'$ and a $\T''$-faulty execution $\exec''$ starting from $\IConf''$, such that $c(\exec') \neq c(\exec'')$.
Without loss of generality, let us assume that $c(\exec') = 0$ and $c(\exec'') = 1$.

We partition the set $\AllProc \setminus \{p\}$ into five groups: $P_1, \ldots, P_5$, where $P_1 = \T''$, $P_5 = \T'$, and $|P_2| = |P_3| = |P_4| = f-1$.
The partition is depicted in Figure~\ref{fig:lower-bound}.
Note that $|P_1| + \ldots + |P_5| + |\{p\}| = 2t + 3(f-1) + 1 = 3f+2t-2 = |\AllProc|$.
%
% Our goal is to construct three new executions: $\exec_s$, $\exec_c$, and $\exec_t$, in such a way that $\exec' \simFor{P_3} \exec_s \simFor{P_1} \exec_c \simFor{P_1} \exec_t \simFor{P_3} \exec''$.
% %
% This would naturally lead to a contradiction:
% by Lemma~\ref{lem:similar-executions}, $0 = c(\exec') = c(\exec_s) = c(\exec_c) = c(\exec_t) = c(\exec'') = 1$.

% All three executions that we construct ($\exec_s$, $\exec_c$, and $\exec_t$) start from $\IConf'$.
% Moreover, $\exec'$ and $\exec''$ start from $\IConf'$ and $\IConf''$ respectively.
% %
% By Lemma~\ref{lem:first-round-determinism}, this implies that all actions taken by correct processes other than $p$ during the first round will be the same in all five executions.

We construct $5$ executions, $\exec_1$, $\ldots$, $\exec_5$ ($\exec_1=\exec''$ and $\exec_5=\exec'$), such that for all $i \in \{1, \dots, 5\}$, group $P_i$ is Byzantine in $\exec_i$, and for all $j \neq i$, group $P_j$ is correct in $\exec_i$. The influential process $p$ is Byzantine for $\exec_2, \exec_3, \exec_4$ and is correct for $\exec_1$ and $\exec_5$; hence we have exactly $f$ Byzantine processes in each execution as $|P_2| = |P_3| = |P_4| = f-1$. Our goal is to show that each pair of adjacent executions will be similar for at least one correct process \atremove{set}, \atrev{which} will then decide the same value. This would lead to a contradiction, since Lemma~\ref{lem:similar-executions} would imply that $0 = c(\exec_5) = \dots = c(\exec_1) = 1$.

%Our goal is to construct three new executions: $\exec_s$, $\exec_c$, and $\exec_t$, in such a way that $\exec' \simFor{P_3} \exec_s \simFor{P_1} \exec_c \simFor{P_1} \exec_t \simFor{P_3} \exec''$.
%

% We can assume that $p$ can send one of two types of messages in the first round: $0$ or $1$, where sending $0$ causes execution $\exec_5$ to happen and sending $1$ causes execution $\exec_1$ to happen. Recall that the initial configurations of $\exec_1$ and $\exec_5$ differ only in the input value of $p$.
%
% \yzrev{By Lemma~\ref{lem:first-round-determinism}, we can assume that our protocol's first round messages are of the form ``I have input $x$'' where $x \in \{0, 1\}$. For simplicity, we simply refer to such a message as $x$. By construction of executions $\exec_1$ and $\exec_5$, their initial configurations differ only in the input value of $p$, and $p$ sending $0$ causes $\exec_5$ to happen and sending $1$ causes $\exec_1$ to happen.} 
% \atnote{I don't like the ``we can assume'' part. I also don't quite understand what we want to say with this paragraph. Do we actually need it?}
%
Let $m_i^1$ and $m_i^5$ denote the messages that $p$ sends to $P_i$ in execution $\exec_1$ and $\exec_5$ respectively.
%
% \atrev{In all executions that we construct, the messages sent by $p$ in the first round will be identical either to the messages it sends in $\exec_1$ or to the messages it sends in $\exec_5$.
% %
% Moreover, by Lemma~\ref{lem:first-round-determinism}, all actions taken by correct processes other than $p$ \textbf{during} the first round will be the same in all executions.
% }
%
By Lemma~\ref{lem:first-round-determinism}, all actions taken by correct processes other than $p$ during the first round will be the same in all executions.
For each $i$, during the first round of $\exec_i$, the processes in Byzantine group $P_i$ will act as if they were correct.
Hence, the only process that acts differently in different executions during the first round is $p$.

In each execution $\exec_i$ ($i \in \{2,3,4\}$), $p$ equivocates by sending $m_j^5$ to the processes in $P_j$ for $j < i$ and  $m_j^1$ to the processes in $P_j$ for $j > i$.
Processes in $P_i$ are Byzantine and $p$ sends both types of messages to them so that they can choose what to relay to other processes.
%
% Note that in $\exec_1$ and $\exec_5$, $p$ honestly follows the protocol for some initial configuration (i.e. sends the same message to all processes). For $i=2,3,4$, $p$ is an equivocating Byzantine process.
In executions $\exec_1$ and $\exec_5$, $p$ honestly follows the protocol.
%
% \atnote{Executions $\exec_1$ and $\exec_5$ are already defined.}
%
%that is role-playing a different initial configuration for each of the two nontrivial partitions of the other processes. 
%

%
In all executions that we construct, all messages sent during the first round are delivered at time $\Delta$.
If $p$ sends $m_j^5$ (resp., $m_j^1$) to $P_j$ in the first round, then we schedule all message delivery events for $P_j$ in exactly the same order as in $\exec_5$ (resp., $\exec_1$).
Thus, after processing all events at time $\Delta$, each group $P_j$ takes one of two states $s_j$ or $t_j$ (where $s_j$ is the state of $P_j$ after the first round in $\exec_5$ and $t_j$ is the state of $P_j$ after the first round in $\exec_1$) depending entirely on which of the two messages they receive from $p$
(see Figure~\ref{fig:lower-bound}).
Moreover, the Byzantine processes in executions $\exec_2$, $\exec_3$, and $\exec_4$ can choose to pretend as if they are in one of the two states ($s_j$ or $t_j$).
We now describe $\exec_2, \exec_3$, and $\exec_4$ in more detail.

\paragraph{Execution $\exec_4$, second round.}
% We construct the second round behavior of $\exec_4$ (we will specify the behavior of subsequent rounds later) to be identical to $\exec_5$ \atreplace{(including all message delivery and arrival times)}{(including the time at which events happen and their order)} with the following modifications:
%
The first two rounds of execution $\exec_4$ are depicted in Figure~\ref{fig:lower-bound-exec-4}.
The second round is identical to $\exec_5$ (including the time at which events happen and their order) with the following modifications:
\begin{itemize}
    % \item Recall that $\{p\} \cup P_4$ are Byzantine. They ($\{p\} \cup P_4$) will send messages to groups $P_1$, $P_2$, and $P_3$ in exactly the same fashion as in $\exec_5$ and send messages to $P_5$ in the same fashion as in $\exec_1$;
    % \item Recall that $\{p\} \cup P_4$ are Byzantine. They ($\{p\} \cup P_4$) will send messages to group $P_3$ in exactly the same fashion as in $\exec_5$.
    % For other processes, $P_4$ will act in exactly the same fashion as in $\exec_1$ (i.e. as if they are correct and were in the state $t_5$ after the first round).
    % Process $p$ will simply remain silent for the rest of the execution;
    \item $p$ is now Byzantine. It sends messages to $P_3$ in the same fashion as in $\exec_5$ and is silent to other processes;
    % \yan{[I'm worried about this change, but I don't know why. I think there was a godo reason I made them send messages to $P_5$ as in $\exec_5$ and not $\exec_1$, probably something like "if not, then $\exec_2$ won't be close enough to $\exec_4$"]}
    
    \item $P_4$ is now Byzantine. They send messages to $P_3$ in the same fashion as in $\exec_5$ (i.e., as if they are correct and were in the state $s_4$ after the first round). For other processes, $P_4$ acts in exactly the same fashion as in $\exec_1$.
    
    % \item $P_3$ is still correct but slow. We will have its messages not received by any other process until a finite moment in time $T$ that we will specify later;
    \item $P_3$ is now slow (but still correct). It sends the same messages as in $\exec_5$ but they are not to be received by any other process until a finite time $T$ that we will specify later;
    
    % \item $P_5$ is now correct, unlike it was in $\exec_5$. However, the messages from $P_5$ to $P_3$ are delayed and do not reach the recipients until after the time $2\Delta$. 
    % Other messages sent by $P_5$ in the second round are delivered in a timely fashion;
    %
    \item $P_5$ is now correct. The messages from $P_5$ to $P_3$ are delayed and do not reach the recipients until after time $2\Delta$; other messages from $P_5$ are delivered in a timely fashion.
    
    % \item All other messages are assumed to be delivered at the exact same times and order as in $\exec_5$.
    %\item All messages sent to $P_3$ by processes other than $P_5$ in the second round will be delivered at the exact same times and in the exact same order as in $\exec_5$.
\end{itemize}
We now look at $P_3$'s perspective. During time interval $[0, 2\Delta]$,
%\todoA{``During the first two rounds'' means ``during the time interval $[0, 2\Delta)$'' and I believe we need it to be a closed interval $[0, 2\Delta]$.} 
%\yan{[I'm not usre ``By the time $2\Delta$ accomplishes that either. Why not be explicit and just say ``Through the closed interval $[0, 2\Delta]$'' or equivalent?]}
$P_3$ will not be able to distinguish this execution from $\exec_5$, since it receives the exact same messages from $\{p\}, P_1, P_2, P_4$, which all have the same state as in $\exec_5$ after the first round (or in $P_4$'s case can fake the same state), and hears nothing in the second round from $P_5$ in both executions.
Thus, by time $2\Delta$, processes in $P_3$ will achieve exactly the same state as in $\exec_5$ and will decide $0$ as well (note that this decision is done in silence, as $P_3$'s messages will not be received by anyone else until  time $T$). Therefore, $\exec_5 \simFor{P_3} \exec_4$. 

% \todoA{I believe, $T$ is a moment in time, and not a duration.}

\paragraph{Execution $\exec_2$, second round.}
The execution depicted in Figure~\ref{fig:lower-bound-exec-2} is similar to $\exec_4$, except that now the set of Byzantine processes is $\{p\} \cup P_2$, and they send messages to group $P_3$ in exactly the same fashion as in $\exec_1$. 
Furthermore, now round-$2$ messages from $P_1$ reach $P_3$ at time $T>2\Delta$. 
%
%\atreplace{We perform the exact same construction as in $\exec_4$, but switching the roles using the symmetries $\exec_i \Leftrightarrow \exec_{6-i}$ and $P_i \Leftrightarrow P_{6-i}$.}{We perform a symmetric construction.}
% In particular, $\{p\} \cup P_2$ will send messages to groups $P_3$, $P_4$, and $P_5$ in exactly the same fashion as $\exec_1$ and send messages to $P_1$ in the same fashion as $\exec_5$. 
%In particular, $\{p\} \cup P_2$ will send messages to group $P_3$ in exactly the same fashion as in $\exec_1$. 
Using an argument symmetric to the one used for $\exec_4$ above, we conclude that by time $2\Delta$, processes in $P_3$ will achieve exactly the same state as in $\exec_1$ and will decide $1$ as well.
Therefore, $\exec_1 \simFor{P_3} \exec_2$.

\paragraph{Execution $\exec_3$.}
% \atnote{I tried to write a simple description of $\exec_3$. The old version by Yan with my edits is below.}
%
Note that, in execution $\exec_2$, exactly the same set of messages is sent in the second round to non-$P_3$ processes, as in execution $\exec_4$.
% (see Figure~\ref{fig:lower-bound-exec-4}).
\atrev{We can set the time and the order of their delivery to be identical as well.}
%
% \yznote{I think this should be phrased not as a choice we are making now, but something we already did.}
% \atnote{We never really formally stated it before. Hence, we do make this choice here.}
%
In the subsequent rounds, unless some message from $P_3$ is delivered to a non-$P_3$ correct process, Byzantine processes in $P_2$ in $\exec_2$ can act in exactly the same way as their correct counterparts in $\exec_4$.
Similarly, Byzantine processes in $P_4$ in $\exec_4$ can act in exactly the same way as their correct counterparts in $\exec_2$.
Hence, the two executions will remain indistinguishable for $P_1$ and $P_5$ until some message from $P_3$ is delivered.
%
% Hence, up to the time $2\Delta$, the two executions are indistinguishable for all correct processes except those in $P_3$.
%
% If the network acts in the same way in the two executions, % in later rounds,
% they will remain indistinguishable for the non-$P_3$ correct process until messages from $P_3$ are received by some correct process outside $P_3$.

To show that the non-$P_3$ correct processes must decide some value without waiting for the messages from $P_3$, we construct execution $\exec_3$ \atadd{(depicted in Figure~\ref{fig:lower-bound-exec-3})}.
In $\exec_3$, $p$ sends $m_j^5$ to the processes in $P_j$ for $j<3$ and $m_j^1$ to the processes in $P_j$ for $j>3$.
This results in all processes except those in $\{p\} \cup P_3$ acting in the same way during the second round as in $\exec_2$ and $\exec_4$.
% This results in processes in $P_1$ and $P_2$ acting in the same way as in $\exec_5$ and processes in $P_4$ and $P_5$ acting in the same way as in $\exec_1$ in the second round.
Processes in $P_3$ are Byzantine and fail by crashing at time $\Delta$, before sending any messages in the second round.
By the {\liveness} property of consensus, there must exist a continuation in which every correct process decides a value.
Let $\rho_3$ be an arbitrary such continuation and let $T$ be an arbitrary moment in time after all correct processes made their decisions.

\paragraph{Executions $\exec_2$ and $\exec_4$, later rounds.}
We are now finally ready to complete executions $\exec_2$ and $\exec_4$. Since they are symmetric, we start by looking at $\exec_4$. 
We have already described $\exec_4$ up to time $2\Delta$.
Recall that messages from $P_3$ are delayed until time $T$ that we specified earlier.
As the resulting execution is identical to $\exec_3$ until time $T$, all correct processes (in particular, all processes in $P_1$) must decide. Thus, $\exec_4 \simFor{P_1} \exec_3$.
By a symmetric argument, $\exec_2 \simFor{P_5} \exec_3$. 
%
%We have now given a similarity between each adjacent pair of executions, so we have obtained a contradiction.
As $\exec_1 \simFor{P_3} \exec_2$ and $\exec_4 \simFor{P_3} \exec_5$, we have established that every two adjacent executions are similar---a contradiction.
%
% \atrev{%
% We are now finally ready to complete executions $\exec_2$ and $\exec_4$. 
% We have specified what happens in these executions in the first two rounds.
% In the subsequent rounds until time $T$, all messages from processes in $P_3$ are delayed and the two executions are identical to $\exec_3$.
% %
% Hence, all correct processes must decide at some point before time $T$.
% Thus, the two executions are similar to $\exec_3$ for all non-$P-3$ correct processes.
% %
% As $\exec_1 \simFor{P_3} \exec_2$ and $\exec_4 \simFor{P_3} \exec_5$, we have established that every two adjacent executions are similar---a contradiction.}
\end{proof}

% \atnote{TODO: comment this out if we decide to keep ``suspect'' as part of the main story.}
% \grayOut{%

\subsection{Weakening the assumptions}
\label{subsec:weaken}

In the definition of a $t$-two-step consensus protocol, we require that for all $\T \subset \AllProc$ of size $t$, there exists a $\T$-faulty two-step execution of the protocol.
This may appear counter-intuitive as all existing fast Byzantine consensus algorithms (including the one proposed in this paper) are leader-based: they guarantee fast termination only in case when the leader is correct.
But there is no contradiction here, as in a $\T$-faulty two-step execution, the processes in $\T$ honestly follow the protocol during the first round and crash only at time $\Delta$.
In all fast Byzantine consensus protocols that we are aware of, the correctness of the leader during the first round is sufficient to reach consensus in two steps.

Nevertheless, we cannot exclude the possibility that some protocols may rely on the leader's participation in the second round as well.
To encompass such protocols, we can require $\T$ to be selected from $\AllProc \setminus \{p\}$, where $p$ is a designated process (the leader of the first view).
More generally, we can introduce a set of ``suspects'' $\Suspects \subset \AllProc$ and only require $\T$-faulty two-step executions to exist for all $\T \subset \Suspects$ of size $t$.
This set $\Suspects$ should be of size at least $2t+2$ (we will see why this is important shortly).
Hence, the fast path of the protocol can rely on $n-(2t+2)$ ``leaders''.
Note that, since $t \le f$, when $f \ge 2$, $n \ge 3f+1 \ge (f + 2t) + 1 \ge 2t + 3$.%
\footnote{Recall that when $f \le 1$, the lower bound trivially follows from the fact that any partially-synchronous Byzantine consensus algorithm needs at least $3f+1$ processes.}
Hence, there is always at least one ``non-suspect''.

The only argument that we will have to modify is in the proof of Lemma~\ref{lem:influential-process}.
There, we will have to select $\T_0$ and $\T_1$ out of $\Suspects$, and not just $\AllProc$.
To be able to select $\T_0$, we require $|\Suspects \setminus (\{p_j, p_{j-1}\} \cup \T_1)| \ge t$.
This holds if and only if $|\Suspects| \ge 2t+2$.

\subsection{Optimality of FaB Paxos} \label{subsec:fab-lower-bound}

% \yan{[Should we move this to the appendix? (or move the Appendix here) They overlap a lot and it might be good to consolidate.]}

% \todoA{Could you clarify how they overlap? To me, they look quite different and serve different purposes: this section speculates that FaB Paxos is optimal for some class of algorithms (essentially, we give a proof sketch of another lower bound), while the appendix compares their proof of lower bound to our main lower bound.}

% \yan{[the logic of why we are ``off by 2'' is essentially done twice: one here and one in the appendix, and at the heart the logic is the same.]}

While $n=3f+2t+1$ is not optimal for fast Byzantine consensus algorithms in general,
it is optimal for a special class of Paxos-like algorithms that separate proposers from acceptors.
In Paxos~\cite{paxos}, one of the first crash fault-tolerant solutions for the consensus problem, Leslie Lamport suggested a model with three distinct types of processes: \emph{proposers}, \emph{acceptors}, and \emph{learners}.
Proposers are ``leaders'' and they are responsible for choosing a safe value and sending it to acceptors.
Acceptors store the proposed values and help the new leader to choose a safe value in case previous leader crashes.
Finally, learners are the processes that trigger the $\fDecide$ callback and use the decided value (e.g., they can execute replicated state machine commands).
In this model, the consensus problem requires all learners to decide the same value.
The Byzantine version of Paxos~\cite{byzantine-paxos} requires presence of at least one correct proposer and $n = 3f + 1$ acceptors, where $f$ is the possible number of Byzantine faults among acceptors.
%
% The Byzantine version of Paxos~\cite{byzantine-paxos} requires $n_p = f_p + 1$ proposers and $n_a = 3f_a + 1$ acceptors, where $f_p$ and $f_a$ represent the possible number of Byzantine faults among proposers and acceptors respectively.

In our algorithm, when a correct leader (proposer) sees that some previous leader equivocated, it uses this fact to exclude one acceptor from consideration as it is provably Byzantine.
This trick only works when the set of proposers is a subset of the set of acceptors.
Moreover, this trick seems to be crucial for achieving the optimal resilience ($n=\max\{3f+2t-1, 3f+1\}$).
When the set of proposers is disjoint from the set of acceptors, or even if there is
just one proposer that is not an acceptor, it can be shown that $n=3f+2t+1$ is optimal.

In order to obtain the $n=3f+2t+1$ lower bound for the model where proposers are separated from acceptors,
we need to make just two minor modifications to our proof of theorem~\ref{the:lower-bound}.
First of all, the {\influential} process $p$ is no longer an acceptor.
Hence, we are left with only 5 groups of acceptors ($P_1, \dots, P_5$) instead of 6 ($\{p\}, P_1, \dots, P_5$).
Second, the groups of acceptors $P_2$, $P_3$, and $P_4$ can now be of size $f$ instead of $f-1$ (since $p$ is no longer counted towards the quota of $f$ Byzantine acceptors).
After these two modifications, the proof shows that there is no $t$-two-step consensus protocol with $n = |P_1| + \dots + |P_5| = 3f+2t$ or fewer acceptors. 
% The resulting proof (for the $3f+2t+1$ lower bound) is very similar as that of Theorem~\cite[Theorem V.1]{fab-paxos}, although some definitions are different and they don't .
% \yan{At this point, the proof (besides the idea of the ``influential process'') is essentially the same as that of Theorem \cite[Theorem V.1]{fab-paxos}. 
% In Appendix~\ref{app:mistake}, we explore the (subtle) differences between our two settings more clearly.}

% \todoA{I would avoid saying that our proof is ``essentially the same'' as in~\cite{fab-paxos}.
% It will likely be used against us by some reviewer.}

% \yan{[I personally feel they are very similar and feel it might be bad to \textbf{not} mention so; maybe ask Petr?]}

% \todoA{They are similar, I agree, and we do need to mention it (and we even do so in the beginning of the section by saying that ``we present a revised version of the lower bound''), but ``essentially the same'' is just a bit too strong here for my taste :). As if we made no contribution whatsoever on the lower bound front.}

% \yan{[how about ``very similar to''?]}
\section{Related Work} \label{sec:related-work}

To the best of our knowledge, Kursawe~\cite{kursawe2002optimistic} was the first to implement a fast (two-step) Byzantine consensus protocol. 
The protocol is able to run with $n=3f+1$ processes, but it is able to commit in two steps only when all $n$ processes follow the protocol and the network is synchronous.
Otherwise, it falls back to a randomized asynchronous consensus protocol.

Martin and Alvisi~\cite{fab-paxos} present FaB Paxos -- a fast Byzantine consensus protocol with $n=5f+1$.
Moreover, they present a parameterized version of the protocol: it runs on $n=3f+2t+1$ processes ($t \le f$), tolerates $f$ Byzantine failures, and is able to commit after just two steps in the common case when the leader is correct, the network is synchronous, and at most $t$ processes are Byzantine.
In the same paper, the authors claim that $n=3f+2t+1$ is the optimal resilience for a fast Byzantine consensus protocol.
In this paper, we show that this lower bound only applies to the class of protocols that separate processes that execute the protocol (acceptors) from the process that propose values (proposers).

Bosco~\cite{bosco} is a Byzantine consensus algorithm that is able to commit values after just one communication step when there is no contention (i.e., when all processes propose the same value).
In order to tolerate $f$ failures, the algorithm needs $5f+1$ or $7f+1$ processes, depending on the desired validity property.

Zyzzyva~\cite{zyzzyva}, UpRight~\cite{upright}, and SBFT~\cite{sbft} are practical systems that build upon the ideas from FaB Paxos to provide optimistic fast path.
Zyzzyva~\cite{zyzzyva} and UpRight~\cite{upright} aim to replace crash fault-tolerant solutions in datacenters.
The evaluations in these papers demonstrate that practical systems based on fast Byzantine consensus protocols can achieve performance comparable with crash fault-tolerant solutions while providing additional robustness of Byzantine fault-tolerance.
In~\cite{sbft}, Gueta et al.\ explore the applications of fast Byzantine consensus to permissioned blockchain.
Due to the high number of processes usually involved in such protocols, the results of this paper are less relevant for this setting.

In~\cite{revisiting-fast-bft-1} and~\cite{revisiting-fast-bft-2}, Abraham et al.\ demonstrate and fix some mistakes in FaB Paxos~\cite{fab-paxos} and Zyzzyva~\cite{zyzzyva}.
Moreover, they combine the ideas from the two algorithm into a new one, called Zelma.
This algorithm lies at the core of the SBFT protocol~\cite{sbft}.

In~\cite{hbft}, the authors claim that their protocol, called hBFT, achieves the two-step latency despite $f$ Byzantine failures with only $3f+1$ processes (as opposed to $5f-1$ required by the lower bound in this paper).
However, in a later paper~\cite{revisiting-hbft}, it was shown that hBFT fails to provide the {\consistency} property of consensus.

In a concurrent work~\cite{abraham2021good-podc}, Abraham et al.\ consider the problem of \emph{Byzantine broadcast} in which a designated leader is expected to reliably disseminate its message to a set of processes.
They show that the problem has a partially synchronous $f$-resilient solution with \emph{good-case latency} of $2$ message delays if and only if $n \ge 5f-1$.
% \atreplace{%
This is very similar to our results \atremove{in Section~\ref{sec:lower-bound}} and their arguments are based on the same ideas.
% }{. The problem that they consider is similar to leader-base fast Byzantine consensus, and the results that they obtain are similar to the ones in this paper.}
For this specific result, we believe, however, that our lower bound \atadd{(presented in Section~\ref{sec:lower-bound})} is more general as it is not limited to leader-based algorithms
and encompasses double-threshold algorithms that distinguish $f$, the resilience of the algorithm, and $t$, the actual number of faulty processes in an execution. 
On the other hand, in \cite{abraham2021good-podc} (with their complementary note~\cite{abraham2021fast}), Abraham et al. give a complete categorization of \atadd{the} good-case latency of broadcast protocols 
%(which they formalize as \emph{good-case latency}) 
%under a wider framework and provides more types of results, 
in both partially–synchronous and synchronous models.
%Thus, the two papers offer different contributions.
% \atadd{On the other hand,~\cite{abraham2021good} also contains lower and upper bounds on the good-case latency of synchronous broadcast protocol.}

% \input{sections/conclusion}

\section*{Acknowledgments} 
The authors are grateful to Jean-Philippe Martin and Lorenzo Alvisi for helpful discussions on their work~\cite{fab-paxos}.

\bibliographystyle{ACM-Reference-Format}
\bibliography{main}
 
\clearpage
\appendix
\section{Generalized Version} \label{app:generalized-version}

The generalized version of our protocol is parametrized by two numbers, $f$ and $t$ ($1 \le t \le f$), and requires $n \ge 3f + 2t - 1$ processes.
When $t = f$, it boils down to the vanilla version of our protocol with $n \ge 5f - 1$.
The protocol solves consensus (provides both safety and liveness) despite up to $f$ processes being Byzantine faulty.
Moreover, it guarantees two-step termination in the common case as long as the actual number of failures does not exceed $t$.

\begin{figure}

    \centering
    \includesvg{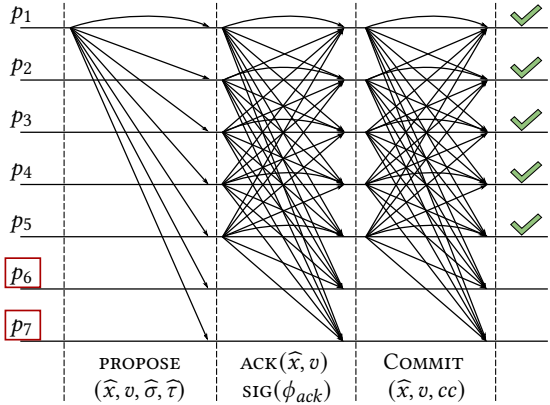}
    
    \caption{Example of a value $\xHat$ committed in view $v$ with the slow path of the generalized protocol with $n=7$, $f=2$, and $t=1$.
    $\sigAck = \fsign_{q}((\MAck, \xHat, v))$, where $q$ is the identifier of the process that sends the $\MAck$ message.}
    % $\sigmaHat$ is the progress certificate, 
    % $\sigAck = \fsign_{q}((\MAck, \xHat, v))$, 
    % where $q$ is the identifier of the process that sends the $\MAck$ message,
    % $\comCert$ is the set of $\MAck$ messages received by the process,  and $\tauHat = \fsign_{p_1}((\MPropose, \xHat, v))$.}
        
    \label{fig:generalized-normal-case-example}
\end{figure}

\subsection{Proposing a value}

The protocol will have two ways through which a value can be decided by a correct process.
First, there is the \emph{fast path}:
upon receiving $n-t$ $\MAck$ messages for the same value in the same view, the process decides this value.
This is similar to the way values are decided in the non-generalized ($n \ge 5f-1$) version of the protocol.
If the actual number of failures does not exceed $t$, the leader is correct, and synchrony holds, then all correct processes must be able to decide a value through the fast path in just two message delays.
Additionally, there \atreplace{will be}{is} a \emph{slow path} that allows all correct processes to decide a value after three message delays in the common case \atadd{when the actual number of failures is greater than $t$}.

In order to construct the slow path, we introduce two additional message types.
First, every time a correct process sends message $\MAck(x, v)$, it also sends along message $\MAckSig(\sigAck)$, where $\sigAck = \fsign_q((\MAck, x, v))$.
$\lceil \frac{n + f + 1}{2} \rceil$ such signatures for the same value and view constitute a \emph{commit certificate}.
As generating the signature is an expensive operation, we send $\sigAck$ in a separate message in order to avoid slowing down the fast path.
\atreplace{Recall that any}{Any} two sets of processes of size $\lceil \frac{n + f + 1}{2} \rceil$ intersect in at least one correct process.
Hence, it is impossible to collect two commit certificates for different values in the same view.
Also, note that if there is a commit certificate for some value $x$ in view $v$, then no value other than $x$ can be committed through the fast path in view $v$ because any set of size $\lceil \frac{n + f + 1}{2} \rceil$ intersects with any set of size $n-t$ in at least one correct process.

Upon collecting a commit certificate for value $x$ in view $v$, a correct process sends message $\MCommit(x, v, \comCert)$ to everyone, where $\comCert$ is the set of signatures that constitute the commit certificate.
Finally, upon collecting $\lceil \frac{n + f + 1}{2} \rceil$ valid $\MCommit$ messages for value $x$ in view $v$, a correct process decides $x$.

Figure~\ref{fig:generalized-normal-case-example} illustrates an example execution of the slow path.

\subsection{View change}

Finally, we need to modify the view change protocol in order to support the slow path.
As before, the new leader starts the view change by collecting $n-f$ votes, and every vote contains the latest propose message that the process has acknowledged.
However, additionally, each process will add to their vote the latest commit certificate that they have collected.

The rest of the view change protocol is mostly unchanged.
However, in case of a detected equivocation, we have three cases to consider instead of two:
\begin{enumerate}
    \item If there is a commit certificate for value $x$ in view $w$ in $\votes'$, then $x$ is selected; \footnote{%
    Recall that $w$ is the highest view number contained in a valid vote and $\votes'$ is the set of $n-f$ valid votes from processes other than $\fleader(w)$.
    \atadd{The new leader has a proof that $\fleader(w)$ is Byzantine.}}

    \item Otherwise, if there is a set $V \subset \votes'$ of $f+t$ valid votes for a value $x$ in view $w$ in $\votes'$, then $x$ is selected;

    \item Otherwise, any value is safe in view $v$, and the leader simply selects its own input value.
\end{enumerate}

\subsection{Correctness proof}

As with the original $5f-1$ protocol, it is easy to see why the protocol satisfies {\liveness}: once a correct process is elected as a leader after the GST, there is nothing that can stop it from driving the protocol to completion.
The {\extendedValidity} property is immediate as well.
Hence, we focus on {\consistency}.

First, let us prove two simple lemmas about the commit certificate:

\begin{lemma} \label{lem:commit-implies-progress}
    If there is a valid commit certificate for value $x$ in view $u$, 
    then there is a valid progress certificate for value $x$ in view $u$.
\end{lemma}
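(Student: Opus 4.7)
The plan is to prove the lemma by leveraging the fact that any commit certificate must contain at least one signature from a correct process, and that correct processes only send \MAck\ messages after verifying a progress certificate.

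First, I would unpack the definitions. A valid commit certificate for $(x, u)$ consists of $\lceil \frac{n+f+1}{2} \rceil$ signatures on $(\MAck, x, u)$ from distinct processes. By the protocol specification, a correct process only emits the signature $\sigAck = \fsign_q((\MAck, x, u))$ after it has received and accepted a valid $\MPropose$ message for $(x, u)$ from $\fleader(u)$, and accepting such a message requires that the accompanying progress certificate $\sigmaHat$ be valid.

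Next, I would apply a simple pigeonhole argument. Since at most $f$ processes are Byzantine, the number of correct signers in any commit certificate is at least $\lceil \frac{n+f+1}{2} \rceil - f = \lceil \frac{n-f+1}{2} \rceil \ge 1$ (which holds for any $n \ge f+1$, and in particular for $n \ge 3f+1$). Hence at least one correct process $q$ must have signed $(\MAck, x, u)$.

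From step one, this correct process $q$ verified a valid progress certificate $\sigmaHat$ for $x$ in view $u$ before producing its \MAck\ signature. Therefore such a progress certificate exists, and the lemma follows. I do not anticipate a major obstacle here; the only thing to be careful about is the bound $\lceil \frac{n-f+1}{2} \rceil \ge 1$, which is immediate under the standing assumption $n \ge 3f+1$ (and in fact the precondition $n \ge 3f + 2t - 1$ for the generalized protocol).
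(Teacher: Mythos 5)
Your proof is correct and follows essentially the same route as the paper's: a valid commit certificate contains $\lceil \frac{n+f+1}{2} \rceil > f$ signatures, so at least one signer is correct, and that correct process verified a valid progress certificate for $x$ in view $u$ before producing its {\MAck} signature. Your version merely spells out the pigeonhole arithmetic a bit more explicitly; there is no substantive difference.
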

\begin{proof}
    A valid commit certificate must contain at least $\lceil \frac{n + f + 1}{2} \rceil$ signatures. 
    Since $\lceil \frac{n + f + 1}{2} \rceil > f$, at least one of these processes must be correct.
    This process had verified the validity of the progress certificate for $x$ in view $u$ before signing.
\end{proof}

\begin{lemma} \label{lem:commitment}
    If there is a valid commit certificate for value $x$ in view $v$,
    then no value other than $x$ can be decided in view $v$.
\end{lemma}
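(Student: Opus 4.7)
The plan is to rule out the two decision paths separately. A correct process decides $y \neq x$ in view $v$ only if either (i) it collects $n-t$ messages $\MAck(y,v)$ (fast path) or (ii) it collects $\lceil \tfrac{n+f+1}{2} \rceil$ valid $\MCommit(y,v,\cdot)$ messages (slow path). Since the commit certificate for $x$ in view $v$ is, by definition, a set $S$ of $\lceil \tfrac{n+f+1}{2} \rceil$ signatures on $(\MAck, x, v)$ from distinct processes, I will use the fact that a correct process signs $(\MAck,\cdot,v)$ for at most one value.

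For the fast path, let $R$ be the set of $n-t$ processes that sent $\MAck(y,v)$. A pigeonhole count gives
\[
|S \cap R| \;\ge\; \lceil \tfrac{n+f+1}{2} \rceil + (n-t) - n \;=\; \lceil \tfrac{n+f+1}{2} \rceil - t.
\]
Subtracting the at most $f$ Byzantine processes, the number of correct processes in $S \cap R$ is at least $\lceil \tfrac{n+f+1}{2} \rceil - t - f$. Using $n \ge 3f+2t-1$ and $f \ge 1$, this evaluates to at least $f \ge 1$, so at least one correct process would have signed $(\MAck,x,v)$ and also sent $\MAck(y,v)$, contradicting single-value acknowledgment per view.

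For the slow path, each valid $\MCommit(y,v,\cdot)$ message carries a valid commit certificate for $y$ in view $v$; hence, if decision of $y$ via the slow path occurs, such a certificate $S'$ of $\lceil \tfrac{n+f+1}{2} \rceil$ signatures on $(\MAck,y,v)$ exists. Applying the same pigeonhole intersection to $S$ and $S'$ (both of size $\lceil \tfrac{n+f+1}{2} \rceil$) yields $|S \cap S'| \ge f+1$, so at least one correct process signed both $(\MAck,x,v)$ and $(\MAck,y,v)$ — again impossible. This is precisely the ``no two commit certificates in the same view'' remark made just before the lemma, now applied inside the proof.

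The only obstacle is the small quorum-arithmetic verification $\lceil \tfrac{n+f+1}{2} \rceil - t - f \ge 1$ under $n \ge 3f+2t-1$; once that is in hand, both cases collapse to the single-acknowledgment-per-view rule and the lemma follows immediately.
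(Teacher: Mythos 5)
Your proof is correct and follows essentially the same route as the paper: both arguments reduce the fast path to the intersection of a commit certificate with a set of $n-t$ acknowledgers and the slow path to the intersection of two commit certificates, each containing at least one correct process that would have had to acknowledge two values in the same view. The only difference is that you carry out the quorum arithmetic explicitly (and your fast-path count in fact yields at least $f$ correct processes in the intersection, more than the single one needed), whereas the paper states the two intersection facts without verification.
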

\begin{proof}
    Since any two sets of processes of size $\lceil \frac{n + f + 1}{2} \rceil$ intersect in at least one correct process, no other value can have a commit certificate in the same view.
    Hence, no other value can be decided through the slow path.
    
    Moreover, no other value can be committed through the fast path because any set of size $\lceil \frac{n + f + 1}{2} \rceil$ intersects with any set of size $n-t$ in at least one correct process.
\end{proof}

\begin{corollary} \label{cor:generalized:no-two-decides-in-the-same-view}
    Two different values cannot be decided in the same view.
\end{corollary}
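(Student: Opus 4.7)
The plan is to proceed by contradiction: assume that two different values $x$ and $y$ are both decided by correct processes in the same view $v$, and split into cases according to which paths (fast or slow) were used.

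\textbf{Case A: at least one of the values was decided through the slow path.} Without loss of generality assume $x$ was decided via the slow path, so some correct process collected $\lceil \frac{n+f+1}{2}\rceil$ valid $\MCommit$ messages for $x$ in view $v$. Since $\lceil \frac{n+f+1}{2}\rceil > f$, at least one of these messages was sent by a correct process, which (by the protocol) only sends $\MCommit(x,v,\comCert)$ after it has itself assembled a valid commit certificate for $x$ in view $v$. Hence a valid commit certificate for $x$ in view $v$ exists, and Lemma~\ref{lem:commitment} immediately rules out any other value being decided in view $v$, contradicting the decision of $y$.

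\textbf{Case B: both values were decided through the fast path.} Then there exist two sets $Q_x, Q_y \subseteq \AllProc$, each of size $n-t$, such that every process in $Q_x$ (resp.\ $Q_y$) sent $\MAck(x,v)$ (resp.\ $\MAck(y,v)$). By the pigeonhole principle, $|Q_x \cap Q_y| \ge 2(n-t) - n = n - 2t$, and since $n \ge 3f+2t-1$ and $f \ge 1$, this intersection contains at least $(3f+2t-1) - 2t = 3f-1 \ge f+1$ processes, so at least one of them is correct. But a correct process sends $\MAck$ for at most one value in any given view, contradicting $x \neq y$.

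Both cases yield a contradiction, so two different values cannot be decided in the same view. The main content is Case A, which just packages Lemma~\ref{lem:commitment}; the only computation worth double-checking is the quorum bound in Case B, which is why the exact relation $n \ge 3f+2t-1$ (with $f \ge 1$) is used.
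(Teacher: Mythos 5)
Your proof is correct and follows essentially the same route as the paper's: Lemma~\ref{lem:commitment} disposes of any decision involving the slow path, and the intersection of two sets of size $n-t$ in at least one correct process handles the fast--fast case. The only difference is that you spell out the quorum arithmetic and the derivation of the commit certificate's existence from the $\MCommit$ messages, both of which the paper leaves implicit.
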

\begin{proof}
    If some value $x$ is decided through the slow path in view $v$, then there must be a commit certificate for $x$ in $v$.
    Hence, by Lemma~\ref{lem:commitment}, no other value can be decided in $v$ (either through the fast or the slow path).

    Two different values cannot be decided through the fast path in the same view because two sets of processes of size $n-t$ intersect in at least one correct process.
\end{proof}

In order to prove consistency, we also need to show that the view change protocol always yields a safe value.
We will do so by induction on the view number.
Suppose that for all view numbers $u < v$, the existence of a progress certificate for value $x$ confirms that $x$ is safe in view $u$.
Now we need to prove that claim for view $v$.

In case when no equivocation is detected, the proof is identical to the one described in Section~\ref{subsec:consistency-proof}.
Otherwise, let us consider the three ways value $x$ can be selected:
\begin{enumerate}
    \item There is a commit certificate for value $x$ in view $w$ in $\votes'$, where $w$ is the highest view in $\votes'$.
    By Lemma~\ref{lem:commit-implies-progress}, there must be a progress certificate for $x$ in view $w$.
    Hence, by the induction hypothesis, $x$ was safe in view $w$ and no value other than $x$ was or will ever be decided in a view $u < w$.
    Moreover, no value $x'$ can be decided in a view $u$ ($w < u < v$) because otherwise $\votes'$ would contain at least one vote for $x'$ in $u$ (which contradict the choice of $w$).
    Finally, by Lemma~\ref{lem:commitment}, no value $x' \neq x$ can be decided in view $w$.
    
    \item There is no commit certificate for any value in view $w$ and there is a set $V \subset \votes'$ of $f+t$ valid votes for a value $x$ in view $w$ in $\votes'$.
    \atrev{By the choice of $w$ and because there is a valid progress certificate for $x$ in $w$, no value other than $x$ can be decided in a view smaller than $w$ or between $w$ and $v$.}
    Moreover, no value can decided in view $w$ through the slow path because otherwise there would be a commit certificate for view $w$ in $\votes'$.
    
    Suppose that some value $x' \neq x$ is decided through the fast path in view $w$. 
    Because we know that $\fleader(w)$ is Byzantine and it does not belong to the set of votes $V$, there are at most $f-1$ Byzantine processes in $V$.
    Hence, the set of $n-t$ processes that acknowledged $x'$ must intersect in at least one correct process with the of $f+t$ processes whose votes are in $V$.
    A correct process cannot acknowledge $x'$ in $w$ and issue a vote for $x \neq x'$ in $w$---a contradiction.

    \item There is no commit certificate for any value in view $w$ and there is no set $V \subset \votes'$ of $f+t$ valid votes for any value in view $w$ in $\votes'$.
    In this case we want to show that any value is safe in $v$.
    \atrev{As in the two other cases}, no value other than $x$ can be decided in a view smaller than $w$ or between $w$ and $v$.
    
    If some value $x$ had been decided in view $w$ through the slow path, there would be a commit certificate for $x$ in view $w$ in $\votes'$ 
    because any set of $\lceil \frac{n+f+1}{2} \rceil$ processes intersects with any set of $n-f$ processes in at least one correct process.
    
    If some value $x$ had been decided in view $w$ through the fast path, there would be at least $f+t$ votes for $x$ in view $w$ in $\votes'$.
    To prove this statement, let us first note that there are votes from at most $f-1$ Byzantine processes in $\votes'$.
    It is sufficient to show that the intersection between the set of $n-t$ processes that acknowledged $x$ and the set of $n-f$ voters in $\votes'$ is at least $(f-1) + (f + t)$ (hence, \atadd{they intersect} in at least $f+t$ correct processes).
    Indeed, given that $n \ge 3f + 2t - 1$, by the pigeonhole principle, any set of $n-f$ processes intersects with any set of $n-t$ processes in at least 
    $(n-f) + (n-t) - n 
    \ge (2f + 2t - 1) + (3f + t - 1) - (3f + 2t - 1)
    \ge (f - 1) + (f + t)$ processes.
\end{enumerate}

\begin{theorem}
    The generalized algorithm satisfies the consistency property of consensus.
\end{theorem}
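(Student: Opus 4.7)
The plan is to prove consistency in essentially the same style as Theorem~\ref{the:consistency} for the vanilla $5f-1$ protocol, but leveraging the slow-path machinery developed in Corollary~\ref{cor:generalized:no-two-decides-in-the-same-view} and Lemmas~\ref{lem:commit-implies-progress}--\ref{lem:commitment}. I would argue by contradiction: assume two correct processes decide distinct values $x$ and $y$ in views $v$ and $v'$ respectively, and without loss of generality assume $v \ge v'$. The goal is to reduce this to a contradiction with the safety invariant that any value equipped with a valid progress certificate for view $u$ is safe in $u$.

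The first step is to observe that whenever a value is decided in some view $u$ (either through the fast path via $n-t$ \MAck{} messages or through the slow path via a commit certificate and $\lceil (n+f+1)/2 \rceil$ \MCommit{} messages), there must exist a valid progress certificate for that value in $u$. For the fast path this holds because the leader of $u$ must have successfully completed the view-change protocol to propose; for the slow path this is exactly Lemma~\ref{lem:commit-implies-progress} once we note the progress certificate is embedded in the leader's proposal. The case $v = v'$ is immediately handled by Corollary~\ref{cor:generalized:no-two-decides-in-the-same-view}, so we may assume $v > v'$.

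Next I would make formal the induction on views that is laid out informally in the paragraphs preceding the theorem. The base case $v=1$ is trivial by convention. For the inductive step at view $v$, a value $x$ is proposed with a valid progress certificate only if it was produced by the selection algorithm from $n-f$ valid votes; there are four exhaustive cases to check---the no-equivocation case (handled exactly as in Lemmata~\ref{lem:safe-if-all-nil}--\ref{lem:no-equivocation-safety}) and the three equivocation cases (commit certificate, $f+t$ matching votes, or neither) spelled out in the three itemized arguments above the theorem. Each of these has already been verified to preserve safety, using the induction hypothesis plus the quorum-intersection counting $n \ge 3f+2t-1$. Applying the safety invariant at view $v$ to $x$, and noting that $y$ was decided in view $v' < v$, forces $y = x$, the desired contradiction.

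The main obstacle is purely bookkeeping: ensuring that the four case-analyses I just cited are actually complete and collectively imply the inductive claim, in particular confirming that when no equivocation is detected the arguments of Section~\ref{subsec:consistency-proof} transport verbatim (the only change in the generalized protocol on the non-equivocation path is the presence of commit certificates in votes, which can only strengthen the safety argument), and that when equivocation is detected the quorum counts $(f+t)$, $\lceil (n+f+1)/2 \rceil$, and $n-t$ interact correctly with $n \ge 3f+2t-1$ exactly as computed in the pigeonhole estimates in the third itemized case above. Once this synthesis is packaged as a single inductive statement, the theorem follows immediately.
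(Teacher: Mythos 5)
Your proposal is correct and follows essentially the same route as the paper's own proof: reduce to the safety invariant established by induction on views (via the progress certificate attached to any decided value), conclude $v = v'$, and then invoke Corollary~\ref{cor:generalized:no-two-decides-in-the-same-view} to rule out two distinct decisions in the same view. The additional bookkeeping you describe about formalizing the four-way case analysis is exactly the content of the inductive argument the paper presents in the paragraphs preceding the theorem.
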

\begin{proof}
    Suppose, by contradiction, that two processes decided different values $x$ and $y$, in views $v$ and $v'$, respectively.
    Without loss of generality, assume that $v \geq v'$.
    Value $x$ can only be decided in view $v$ if there is a progress certificate for $x$ in $v$ and, as we just proved by induction, such a certificate implies that $x$ is safe in $v$.
    As no value other than $x$ can be decided in a view less than $v$, we have $v=v'$. 
    However, by Corollary~\ref{cor:generalized:no-two-decides-in-the-same-view}, $x = x'$---a contradiction.
\end{proof}

\end{document}